\documentclass[11pt,a4paper]{article}
\usepackage[T1]{fontenc}
\usepackage[utf8]{inputenc}
\usepackage{lmodern}
\usepackage{amsmath,amssymb,amsthm,mathtools}
\usepackage[margin=27mm]{geometry}
\usepackage{microtype}
\usepackage{xcolor}
\usepackage[colorlinks=true,linkcolor=blue!45!black,citecolor=blue!45!black,urlcolor=blue!45!black]{hyperref}
\hypersetup{pdftitle={The Hellinger Conjecture for Antipodal Boolean Functions},pdfauthor={Vu Khac Ky},pdfsubject={Hellinger affinity, Boolean functions, and spectral matching}}
\numberwithin{equation}{section}
\newtheorem{theorem}{Theorem}[section]
\newtheorem{proposition}[theorem]{Proposition}
\newtheorem{lemma}[theorem]{Lemma}
\newtheorem{corollary}[theorem]{Corollary}
\theoremstyle{remark}

\DeclareMathOperator{\Tr}{Tr}
\DeclareMathOperator{\diag}{diag}
\DeclareMathOperator{\sgn}{sgn}
\DeclareMathOperator{\TV}{TV}
\newcommand{\E}{\mathbb E}
\newcommand{\R}{\mathbb R}
\newcommand{\Q}{\{-1,1\}}
\newcommand{\cZ}{\mathcal Z}
\newcommand{\one}{\mathbf 1}
\newcommand{\norm}[1]{\left\lVert#1\right\rVert}
\allowdisplaybreaks[1]

\title{The Hellinger Conjecture for\\Antipodal Boolean Functions}
\author{Vu Khac Ky\thanks{FPT University, Vietnam. Email: \href{mailto:kyvk2@fe.edu.vn}{\texttt{kyvk2@fe.edu.vn}}.}}
\date{October 2026}

\begin{document}
\maketitle

\begin{abstract}
The balanced Hellinger conjecture compares the Bhattacharyya coefficient
of two output distributions induced by a balanced Boolean function under
binary symmetric noise. We prove that a single coordinate minimizes this
coefficient among antipodal Boolean functions, whose values change sign
when all input signs are reversed, at every noise level.
The proof uses a Gram matrix of square-root conditional densities.
A nuclear-norm lower bound reduces the problem to matching the eigenvalues
on its even and odd Fourier subspaces, which multiplication by an
antipodal function exchanges. A coordinate function attains the minimum
matching cost. Coordinatewise monotonicity of the matrix bound extends
the result to arbitrary independent binary-input channels.
The same argument covers functions whose sign is reversed by flipping a
fixed subset of coordinates; concavity then extends the bound to randomized
binary functions. Applications include bounds on mutual information and
hypothesis-testing errors, linear Gaussian observations, and a weighted
edge-boundary inequality.
\end{abstract}

\smallskip
\noindent\textbf{Keywords.} Boolean functions; Hellinger affinity;
Bhattacharyya parameter; nuclear norm; spectral matching.

\section{Introduction and main results}\label{sec:intro}

A balanced Boolean function partitions the input space into two equally
likely classes. After the input passes through a noisy channel, estimating
the function value is a binary hypothesis-testing problem between the two
conditional output distributions. The balanced Hellinger conjecture asks
whether retaining one input bit minimizes the Bhattacharyya coefficient
of these distributions under binary symmetric noise. The resulting bound
on mutual information connects this testing problem to the
Courtade--Kumar conjecture~\cite[Proposition~1]{ABCKN17}.
We study Boolean functions for which a fixed reversal of input signs
exchanges the two classes.

Let $X$ be uniform on the Boolean cube $\Q^n$, where $n\ge1$ and
$[n]=\{1,\ldots,n\}$. A Boolean function $f:\Q^n\to\Q$ is
\emph{balanced} if $\E f(X)=0$, so its two values are equally likely.
It is \emph{antipodal}, or \emph{odd}, if $f(-x)=-f(x)$ for every $x$.
This implies balance by pairing each $x$ with $-x$.
A \emph{dictator} is a coordinate function $f(x)=x_i$.
Allowing either sign gives a \emph{signed dictator}, $f(x)=\pm x_i$.

For probability laws $P,Q$, their \emph{Hellinger affinity}, also called
the Bhattacharyya coefficient, is
$Z(P,Q)=\int\sqrt{pq}\,d\nu$, where $p,q$ are their densities relative
to a common finite dominating measure $\nu$, such as $P+Q$.
For discrete outputs the integral is simply a sum. The value is
independent of $\nu$ and lies in $[0,1]$; smaller values mean less overlap.

\begin{samepage}
For $0\le\rho\le1$, form $Y$ by independently flipping each bit of $X$
with probability $(1-\rho)/2$, so $\E[X_iY_i]=\rho$.
The \emph{noise operator} is the conditional expectation operator
\mbox{$T_\rho u(y)=\E[u(X)\mid Y=y]$}, defined for any real
function $u$ on the cube. Write $\psi(t)=\sqrt{1-t^2}$ for $-1\le t\le1$
and $\cZ_\rho(f)=\E_Y\psi(T_\rho f(Y))$. For balanced $f$, this is exactly
$Z(P_{Y\mid f(X)=1},P_{Y\mid f(X)=-1})$, as shown in
Section~\ref{sec:gram}. For a signed dictator its value is $\sqrt{1-\rho^2}$.
\end{samepage}

\begin{theorem}[Antipodal Hellinger inequality]\label{thm:main}
Every antipodal Boolean function satisfies
\begin{equation}\label{eq:main}
 \E_Y\sqrt{1-(T_\rho f(Y))^2}\ge\sqrt{1-\rho^2},
 \qquad 0\le\rho\le1.
\end{equation}
Equality is attained by every signed dictator.
\end{theorem}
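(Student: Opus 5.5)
The plan is to turn $\cZ_\rho(f)$ into a nuclear norm of a matrix built from square roots of the channel kernel, and then bound that nuclear norm by an eigenvalue-matching cost that a dictator attains.

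\emph{Step 1 (Gram matrix).} Write $K(x,y)=\Pr(Y=y\mid X=x)=\prod_i\frac{1+\rho x_iy_i}{2}$ and $A_\pm=f^{-1}(\pm1)$. For balanced $f$ we have $p_\pm(y)=2^{-(n-1)}\sum_{x\in A_\pm}K(x,y)$, and, as in Section~\ref{sec:gram}, $\cZ_\rho(f)=\sum_y\sqrt{p_+(y)p_-(y)}$. Put $v_\pm(y)=(\sqrt{K(x,y)})_{x\in A_\pm}$. Then
\[
\sqrt{p_+(y)p_-(y)}=2^{-(n-1)}\norm{v_+(y)}\,\norm{v_-(y)}\ge 2^{-(n-1)}\,\langle v_+(y),Wv_-(y)\rangle
\]
for every contraction $W$ from $\R^{A_-}$ to $\R^{A_+}$. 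Summing over $y$ and optimizing over $W$ gives $\cZ_\rho(f)\ge 2^{-(n-1)}\norm{B}_*$. Here $B$ is the $A_+\times A_-$ block of the Gram matrix $G(x,x')=\sum_y\sqrt{K(x,y)K(x',y)}$.

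The Gram matrix factorizes over coordinates: each factor equals $1$ if $x_i=x'_i$ and $s:=\sqrt{1-\rho^2}$ otherwise. Hence $G=\bigotimes_{i=1}^n\begin{pmatrix}1&s\\ s&1\end{pmatrix}$. Its eigenvectors are the characters $\chi_S$, with eigenvalues $\lambda_S=(1+s)^{n-|S|}(1-s)^{|S|}$.

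With $D=\diag(f)$, the matrix $G-DGD$ vanishes on the diagonal blocks and equals $2B$, $2B^{\mathsf T}$ off them. Therefore $\norm{G-DGD}_*=4\norm{B}_*$, and
\[
\cZ_\rho(f)\ge 2^{-(n+1)}\norm{G-DGD}_*.
\]

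\emph{Step 2 (reduction to matching).} Let $E$ and $O$ be the orthogonal projections onto the even and odd Fourier subspaces. Pinching does not increase the nuclear norm, so
\[
\norm{G-DGD}_*\ge\norm{E(G-DGD)E}_*+\norm{O(G-DGD)O}_*.
\]
Since $f$ is antipodal, multiplication by $f$ (that is, $D$) is an orthogonal map exchanging the even and odd subspaces. Consequently $EDGDE=D(OGO)D$ is unitarily equivalent to the odd part of $G$, transported to the even subspace, and symmetrically for the odd block.

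By Lidskii's (Mirsky's) inequality, $\norm{X-Y}_*\ge\sum_j|\lambda_j^\downarrow(X)-\lambda_j^\downarrow(Y)|$ for Hermitian $X,Y$. So each pinched term is at least the cost of matching, in decreasing order, the multiset $\{\lambda_S:|S|\text{ even}\}$ against $\{\lambda_S:|S|\text{ odd}\}$. Both multisets have size $2^{n-1}$, and the two pinched terms give the same cost. This lower bound no longer depends on $f$.

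\emph{Step 3 (the matching cost).} It remains to show
\[
2\cdot 2^{-(n+1)}\sum_j\bigl|\lambda^{\downarrow}_{j,\mathrm{even}}-\lambda^{\downarrow}_{j,\mathrm{odd}}\bigr|\;\ge\; s.
\]
For $f(x)=x_1$, the operator $D$ maps $\chi_S$ to $\chi_{S\triangle\{1\}}$. The natural pairing $S\leftrightarrow S\triangle\{1\}$ gives the cost
\[
\sum_{S\not\ni1}\bigl(\lambda_S-\lambda_{S\cup\{1\}}\bigr)=2s\,(1+s+1-s)^{n-1}=2^n s.
\]
Multiplied by $2\cdot2^{-(n+1)}$, this gives exactly $s$. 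As a check, $n=1$ gives $2^{-2}\cdot 4s=s$.

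The main obstacle is showing that the sorted (Lidskii) matching cost is no smaller than this dictator pairing, since Lidskii's inequality forces the sorted matching, not the pairing we like. I would first verify that sorting by $\lambda_S$, which is decreasing in $|S|$, pairs even and odd levels in a way that, after telescoping $\sum_j|a_j-b_j|\ge\sum_j(a_j-b_j)$ weighted by the level multiplicities, recovers $\sum_k(-1)^k\binom nk(1+s)^{n-k}(1-s)^k\cdot(\text{sign})$. If the signs do not telescope cleanly, I would instead take a trace lower bound $\Tr(U(X-Y))$ with a well-chosen unitary $U$ that reproduces the dictator pairing directly. The trace of this matrix equals $2^n s$ by the binomial identity above.

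Equality for signed dictators is the value $\sqrt{1-\rho^2}$ already noted before the theorem.
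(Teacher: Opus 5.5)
Your Steps~1 and~2 follow essentially the paper's route. They give the Gram bound $\cZ_\rho(f)\ge 2^{-(n+1)}\norm{G-D_fGD_f}_1$, then the even/odd exchange plus Mirsky--Lidskii, which is Lemma~\ref{prop:spectral}. Your pinching loses nothing, since $G-D_fGD_f$ already commutes with the global flip.

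The gap is in Step~3, and it is the decisive step. Lidskii bounds $\tfrac12\norm{G-D_fGD_f}_1$ below by the \emph{sorted} matching cost $\sum_j|a_j-b_j|$, which is the \emph{minimum} over all pairings of the even and odd lists. The dictator pairing $S\leftrightarrow S\triangle\{1\}$ is only one pairing. So your computation of its cost, $2^ns$, shows only $\sum_j|a_j-b_j|\le 2^ns$. You need the reverse inequality, namely that this pairing is optimal, and neither fallback supplies it.

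\emph{Telescoping.} It gives $\sum_j(a_j-b_j)=\Tr G_+-\Tr G_-=\prod_i\bigl((1+s)-(1-s)\bigr)=(2s)^n$. This is strictly below $2^ns$ for $n\ge2$ and $0<s<1$. The sign of $a_j-b_j$ genuinely alternates: for $n=2$, $a=((1+s)^2,(1-s)^2)$ and $b=(1-s^2,1-s^2)$.

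\emph{Trace bound.} A bound $\Tr\bigl(U(G_+-D_fG_-D_f)\bigr)$ with $U$ chosen to mimic the dictator pairing is not $f$-independent, because the second term changes with $f$. The infimum over all exchanges equals the sorted cost, so any valid argument must lower-bound that sorted cost directly.

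The missing ingredient is the threshold-count identity of Lemma~\ref{lem:counts}. Every pairing costs at least $\int_0^\infty|\#\{j:a_j>t\}-\#\{j:b_j>t\}|\,dt$, with equality for the sorted pairing. For $0<s<1$ the levels $\lambda_d=(1+s)^{n-d}(1-s)^d$ strictly decrease. For $t\in(\lambda_{d+1},\lambda_d)$ the count difference is $\bigl|\sum_{k\le d}(-1)^k\binom nk\bigr|=\binom{n-1}d$. Hence $\sum_j|a_j-b_j|=\sum_{d=0}^{n-1}\binom{n-1}d(\lambda_d-\lambda_{d+1})=2^ns$, which is exactly your dictator cost.

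Equivalently, as in Lemma~\ref{lem:cost}: only dictator pairs joining degrees $d$ and $d+1$ cross such a threshold, and all their upper endpoints have the same parity. So the dictator pairing attains the threshold bound at every gap.

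The endpoints are checked directly. At $s=0$ all eigenvalues equal $1$; at $s=1$ the even list is $(2^n,0,\dots,0)$ and the odd list is zero. With this step added, your argument is complete and coincides with the paper's.
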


The extension to independent channels is equally sharp. Let
$W_i(\cdot\mid\pm1)$ be the two conditional output laws, or rows, of
channel $i$, and let $Z_i=Z(W_i(\cdot\mid1),W_i(\cdot\mid-1))$ be
its Bhattacharyya parameter~\cite[Sec.~I]{Arikan09}. The product channel is
$W(\cdot\mid x)=\bigotimes_i W_i(\cdot\mid x_i)$: conditional on $X$,
the outputs are independent and $Y_i$ depends only on $X_i$.
For any channel $W$ and any Boolean $f$, put
$\cZ_W(f)=\E_Y\psi(\E[f(X)\mid Y])$. For balanced $f$, this is
the Bhattacharyya parameter of the induced binary-input channel with
input $f(X)$ and output $Y$.

\begin{theorem}[Independent channels]\label{thm:channels}
For independent binary-input channels, every antipodal Boolean function
satisfies $\cZ_W(f)\ge\min_i Z_i$. Equality is attained by a signed dictator
on a coordinate of minimum affinity.
\end{theorem}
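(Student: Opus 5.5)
The plan is to follow the abstract's route: a Gram-matrix/nuclear-norm lower bound that is coordinatewise monotone in the channel, and then a reduction to Theorem~\ref{thm:main}.

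\emph{Step 1 (nuclear-norm form).} For a channel $W$ with output density $w(y\mid x)$ relative to some $\nu$, let $u_x(y)=\sqrt{w(y\mid x)}$ and define the Gram matrix $G_W(x,x')=\langle u_x,u_{x'}\rangle=\prod_i Z(W_i(\cdot\mid x_i),W_i(\cdot\mid x'_i))$. This product equals $\prod_{i:\,x_i\ne x'_i}Z_i$, so $G_W=\bigotimes_i\begin{pmatrix}1&Z_i\\ Z_i&1\end{pmatrix}$. For balanced $f$, write $A=\sqrt{\sum_{f(x)=1}2^{-n+1}w(\cdot\mid x)}$ and $B$ for the analogous quantity over $f(x)=-1$. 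Then $\cZ_W(f)=\langle A,B\rangle$. By Cauchy--Schwarz in the mixture, or equivalently the representation used in Section~\ref{sec:gram}, one gets a lower bound $\cZ_W(f)\ge 2^{-n}\norm{D_f G_W^{1/2}\cdots}$. More precisely, the bound I aim to take is
\[\cZ_W(f)\ge 2^{-n}\,\|\Pi_+ G_W^{1/2} F\,G_W^{1/2}\Pi_-\|_{*}\]
or a similar trace expression, where $F=\diag(f)$. This is the same matrix bound the paper uses to prove Theorem~\ref{thm:main}. I will take it as given from Section~\ref{sec:gram}, in the form $\cZ_W(f)\ge \Phi_f(Z_1,\dots,Z_n)$. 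Here $\Phi_f$ is computed from the even/odd spectral matching, and it equals $\cZ$ for dictators.

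\emph{Step 2 (monotonicity).} I will show that $\Phi_f$ is nondecreasing in each $Z_i$ on $[0,1]$. The eigenvalues of $G_W$ on the character $\chi_S$ are $\prod_{i\in S}(1-Z_i)\prod_{i\notin S}(1+Z_i)$. Multiplication by an antipodal $f$ maps even characters ($|S|$ even) to odd ones. The matching bound is $2^{-n}\sum_S\sqrt{\lambda_S\lambda_{\pi(S)}}$ minimized over the permutations $\pi$ induced by $f$. Because $\lambda_S$ has a tensor form, each factor can be examined one coordinate at a time. Replacing $Z_i$ by a smaller value $Z_i'$ amounts to composing channel $i$ with a degradation, or, in the opposite direction, representing the larger-affinity channel as a degraded version. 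Data processing for Hellinger affinity gives $\cZ$ nondecreasing under degradation. The cleaner route is to verify monotonicity of the matrix bound directly, by differentiating $\sqrt{\lambda_S\lambda_{\pi(S)}}$ in $Z_i$, pairing $S$ with $S\triangle\{i\}$, and using AM--GM. I expect this pairing argument to be the main obstacle, because the matching $\pi$ does not respect the $i$-split. I would first handle it by noting that the minimum over matchings of a sum of monotone functions is monotone when each term is monotone, so it suffices to show that every term $\sqrt{\lambda_S\lambda_T}$ with $|S|,|T|$ of opposite parity is nondecreasing after symmetrizing over $i$.

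\emph{Step 3 (reduction).} Set $z=\min_i Z_i$. By monotonicity, $\cZ_W(f)\ge\Phi_f(Z_1,\dots,Z_n)\ge\Phi_f(z,\dots,z)$. With all affinities equal, $G_W$ coincides with the Gram matrix of a BSC product with $\sqrt{1-\rho^2}=z$. This holds because a BSC with correlation $\rho$ has affinity $\sqrt{1-\rho^2}$, and $G$ depends only on the $Z_i$. So $\Phi_f(z,\dots,z)\ge\sqrt{1-\rho^2}=z$ is exactly the matching bound established in the proof of Theorem~\ref{thm:main}. For equality, take a signed dictator $f=\pm x_j$ with $Z_j=z$. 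Then $f(X)$ is observed only through $Y_j$, and the other outputs are independent of $X_j$, so $\cZ_W(f)=Z_j=z$.
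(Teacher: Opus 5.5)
\textbf{There is a genuine gap in Step~2, the coordinatewise monotonicity.} It is the only new ingredient beyond the equal-affinity bound, and you leave it unresolved. Steps~1 and~3 and the equality case are fine provided your $\Phi_f$ is the paper's quantity $q_f(\boldsymbol z)=\frac1{2N}\norm{G_{\boldsymbol z}-D_fG_{\boldsymbol z}D_f}_1$ from Proposition~\ref{prop:gram}. But in Step~2 you replace it by a spectral formula that is not the paper's bound and is not even well defined:
\begin{itemize}
\item Multiplication by a general antipodal $f$ does not permute the Walsh basis; only signed parity functions do.
\item Even for $f=x_j$, where $\pi(S)=S\triangle\{j\}$, one gets $2^{-n}\sum_S\sqrt{\lambda_S\lambda_{\pi(S)}}=\sqrt{1-Z_j^2}$, not $Z_j$.
\item The term-by-term reduction fails. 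The $i$-th factor of $\sqrt{\lambda_S\lambda_T}$ is $1-Z_i$, $1+Z_i$, or $\sqrt{1-Z_i^2}$ according as $i$ lies in both, neither, or exactly one of $S,T$, so individual terms are not monotone.
\end{itemize}
More fundamentally, the even/odd spectral bound itself is not coordinatewise nondecreasing. The normalized matching cost $N^{-1}\sum_j|a_j-b_j|$ is $1/4$ at $(1/4,1/4,1/4,1/4)$ but $15/64$ at the coordinatewise larger point $(1/4,1/2,1/2,3/4)$ (Appendix~\ref{app:heterogeneous}). So monotonicity cannot be established after the spectral comparison. It must be proved for the $f$-dependent nuclear-norm quantity $q_f$ first.

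\textbf{The missing idea is Lemma~\ref{lem:monotonicity}.} Replacing $z_i$ by $cz_i$ gives $G_{\boldsymbol z'}=\frac{1+c}2G_{\boldsymbol z}+\frac{1-c}2R_iG_{\boldsymbol z}R_i$ with $R_i=\diag(x_i)$. Since $R_i$ commutes with $D_f$, the difference $G-D_fGD_f$ is the same convex combination. Convexity and orthogonal invariance of the nuclear norm then give $q_f(\boldsymbol z')\le q_f(\boldsymbol z)$.

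\textbf{Alternatively, your data-processing remark can be made rigorous,} but it must be applied to $\cZ$ rather than to $\Phi_f$, and with a specific upgrade. Assume $h=\min_iZ_i>0$ (otherwise the claim is trivial). Let $V_i$ reveal $X_i$ with probability $1-h/Z_i$ and otherwise pass it through $W_i$. Then $Z(V_i)=h$ and $W_i$ is a degradation of $V_i$. Conditional Jensen, Proposition~\ref{prop:gram} for $V$, and Theorem~\ref{thm:nuclear} give $\cZ_W(f)\ge\cZ_V(f)\ge q_f(h)\ge h$. Comparing with an arbitrary channel of affinity $h$, such as a binary symmetric one, would not work. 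An erasure channel with $Z_i<1$ reveals its input with positive probability, so it is not a degradation of any binary symmetric channel with positive crossover probability.
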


The main results assume independent uniform input bits.
The coordinate channels may differ and have arbitrary measurable output
spaces. In particular, for binary symmetric channels with correlations
$\rho_i$, the bound is $\min_i\sqrt{1-\rho_i^2}$. The argument first treats identical
coordinate affinities, then compares unequal affinities to their minimum.

Expectations are taken under the stated joint law; in particular,
$\E f=\E[f(X)]$ for uniform $X$. Subscripts specify the variable
averaged when this matters. Entropy and mutual information use base-two
logarithms, whereas $\log$ and $\ln$ denote natural logarithms.

\subsection{Why symmetry helps}

Suppose each coordinate channel has Bhattacharyya parameter $h$.
The output distributions for inputs $x$ and $-x$ have affinity $h^n$,
which can be much smaller than the desired bound $h$.
Instead, we retain the affinities between all pairs of conditional output
distributions in a Gram matrix of square-root densities. Partitioning
its rows and columns by the two values of $f$ gives an off-diagonal block.
Its nuclear norm---the sum of its singular values---provides, after
normalization, a lower bound on the affinity of the two output mixtures.
Section~\ref{sec:gram} proves this reduction and a matrix lemma for
exchanging two invariant subspaces.

On the cube, the Gram matrix preserves the subspaces of even and odd
functions under input reversal. Multiplication by an antipodal $f$
preserves Euclidean norms and exchanges these subspaces.
An eigenvalue comparison gives a lower bound by matching the two spectra:
minimize the sum of absolute differences between paired eigenvalues and
divide by the input alphabet size. For equal coordinate
affinities, pairing Fourier basis functions whose index sets differ by
one fixed coordinate attains this minimum. Multiplication by that
coordinate function gives exactly this pairing, explaining sharpness.

For unequal affinities, averaging the Gram matrix with a copy obtained by
reversing selected row and column signs decreases one affinity without
increasing the matrix bound. Reducing all affinities to their minimum then
returns us to the equal-affinity calculation. Section~\ref{sec:mainproof}
carries out the spectral matching and the monotonicity argument.

\subsection{Related work}

Anantharam, Bogdanov, Chakrabarti, Jayram, and Nair
formulated the Hellinger conjecture~\cite[Conjecture~2]{ABCKN17} as
\begin{equation}\label{eq:conjecture}
 \sqrt{1-(\E f)^2}-\cZ_\rho(f)\le1-\sqrt{1-\rho^2}.
\end{equation}
For balanced $f$, this is precisely~\eqref{eq:main}; for general
$f$, the bias term cannot be omitted. The Courtade--Kumar
conjecture~\cite{CK14} concerns the analogous maximization of
$I(f(X);Y)$, and follows from this Hellinger inequality by the entropy
comparison in~\cite[Proposition~1]{ABCKN17}.
The Courtade--Kumar conjecture has recently been proved for all Boolean
functions and all noise levels by Ky and Tran~\cite[Theorem~1.2]{KT26}
and, independently, by Chen et al.~\cite[Theorem~1.1]{CGJLMNW26}.

Nonlinear noise stability studies expectations of nonlinear functions
of posterior means. Li and M\'edard~\cite{LM21} relate its extremizers
to non-interactive correlation distillation and mutual information,
while Yu~\cite{Yu23} develops a Fourier optimization approach.
Yu's later work~\cite[Theorem~4.5]{Yu26} proves that dictators are
locally optimal among balanced functions, for each $0<\rho<1$, when
the objective is twice continuously differentiable and strictly convex.
With computer-assisted estimates, it also establishes the balanced
Courtade--Kumar inequality for $\rho\in[0,0.914]$~\cite[Theorem~1.5]{Yu26}.

Chen and Nair~\cite{CN24} study the relation between dictator
optimality and isoperimetric inequalities. Chen, Gohari, and
Nair~\cite{CGN26} develop a differential-equation approach to the
Hellinger and Courtade--Kumar problems. These approaches connect
noise-dependent inequalities to geometric or differential estimates;
our argument instead uses the two spectra forced by a sign-reversing
symmetry.

Durcik, Fraccaroli, and Roos~\cite[Theorem~1.1]{DFR26} prove the weak
Hellinger conjecture for arbitrary Boolean input functions and one-bit
statistics of the noisy output. Their proof reduces to a three-parameter
inequality, established by polynomial approximations and exact
computer-assisted positivity checks. The distinction here is the
observation: we retain the full noisy output, while imposing a
sign-reversing symmetry on the input function.

The low-noise limit has also seen progress. Durcik, Ivanisvili, Roos, and
Xie~\cite[Theorem~1.1 and~(1.4)]{DIRX26} prove the sharp critical-exponent isoperimetric
inequality on the cube. In particular, if $s_f(x)$ counts the single-bit
flips that change $f(x)$, they prove $\E\sqrt{s_f(X)}\ge1$ for every
balanced Boolean function. The quantity $s_f(x)$ is called pointwise
sensitivity. This is the sharp low-noise consequence predicted by
the balanced Hellinger inequality. Appendix~\ref{app:boundary} obtains
a weighted matrix version for functions with a sign-reversing coordinate
flip, using the same Gram bound as at nonzero noise.

An announcement in the OpenAI manuscript collection~\cite{OpenAI26},
entry~119, states a proof of the Hellinger inequality for arbitrary
Boolean output bias and every noise correlation. The present paper
gives a self-contained spectral argument for sign-reversing functions
and extends it to arbitrary independent binary-input channels.

The matrix tools themselves are classical: singular-value
comparison~\cite{Mirsky60,Audenaert14} and ordered matching on the
line~\cite{Vallender74}. We prove the forms needed here. The steps
specific to the channel problem are the Gram reduction, the exact
matching of the even and odd cube spectra, and coordinatewise monotonicity.

Section~\ref{sec:extensions} describes the additional sign-reversal symmetries
and gives examples. Section~\ref{sec:applications} treats information
and testing, and Section~\ref{sec:limitations} explains what the method
does not yet cover. Appendices~\ref{app:boundary} and~\ref{app:lift}
record the edge-boundary limit and the limitation of adding one auxiliary
input bit. Appendix~\ref{app:calculations} collects two auxiliary calculations.

\section{A matrix lower bound}\label{sec:gram}

We first separate the channel geometry from the structure of the
Boolean cube. The first result applies to any finite uniform input
alphabet; the second is a matrix statement about two spaces exchanged
by a symmetry.

All matrices are real. We write $M^{\mathsf T}$ for the transpose,
$\Tr M$ for the trace, and $\norm{\cdot}_2$ for Euclidean length.
If $s_j(M)$ are the singular values of $M$, its \emph{nuclear norm}
(or trace norm) is $\norm M_1=\sum_j s_j(M)$. For symmetric $M$ it
is the sum of the absolute eigenvalues; for positive semidefinite $M$
it equals $\Tr M$. We use the triangle inequality, invariance under
orthogonal changes of coordinates, and the rank-one identity
$\norm{ab^{\mathsf T}}_1=\norm a_2\norm b_2$.

\subsection{From mixture affinity to a Gram matrix}

In this subsection, let $X$ be uniform on any finite alphabet
$\mathcal X$ of size $N$, let $f:\mathcal X\to\Q$, and let $Y$ be
the output of a channel $W$. The functional remains
$\cZ_W(f)=\E\psi(\E[f(X)\mid Y])$. Choose
a finite measure $\nu$ dominating its rows, for example
$\nu=\sum_xW(\cdot\mid x)$, and write $w_x$ for the density of row $x$.
Let $A=f^{-1}(1)$ and $B=f^{-1}(-1)$, and define
$S_A(y)=\sum_{x\in A}w_x(y)$ and $S_B(y)=\sum_{x\in B}w_x(y)$.
The output density is $(S_A+S_B)/N$, and the posterior mean is
$(S_A-S_B)/(S_A+S_B)$ wherever this density is positive. Hence
\[
 \cZ_W(f)=\frac2N\int\sqrt{S_A(y)S_B(y)}\,d\nu(y).
\]
Zero-density outputs contribute zero. If $p=\Pr(f(X)=1)\in(0,1)$
and $P_\pm=P_{Y\mid f(X)=\pm1}$, this also gives
$\cZ_W(f)=2\sqrt{p(1-p)}\,Z(P_+,P_-)$. Thus for balanced $f$ the
functional is exactly the class-conditional affinity.

Define $a_y=(\sqrt{w_x(y)})_{x\in A}$ and
$b_y=(\sqrt{w_x(y)})_{x\in B}$. The Gram matrix of the square-root
densities has entries $G(x,x')=\int\sqrt{w_xw_{x'}}\,d\nu$.
It is positive semidefinite with unit diagonal. Its off-diagonal block
$G[A,B]$, the submatrix with rows in $A$ and columns in $B$, equals
$\int a_yb_y^{\mathsf T}\,d\nu$. The integrability needed here follows
from $\norm{a_y}_2\norm{b_y}_2\le(S_A(y)+S_B(y))/2$. The triangle
inequality now gives
\[
 \norm{G[A,B]}_1
 \le\int\norm{a_yb_y^{\mathsf T}}_1\,d\nu
 =\int\sqrt{S_A S_B}\,d\nu
 =\frac N2\cZ_W(f).
\]

Let $D_f=\diag(f(x):x\in\mathcal X)$, the diagonal matrix representing
multiplication by $f$. The entry of $D_fGD_f$ at $(x,x')$ is
$f(x)f(x')G(x,x')$. Thus $G-D_fGD_f$ removes within-class entries
and doubles entries between the two classes.

\begin{proposition}[Gram matrix bound]\label{prop:gram}
For the finite uniform input model above,
\begin{equation}\label{eq:gram-bound}
 \cZ_W(f)\ge\frac1{2N}\norm{G-D_fGD_f}_1.
\end{equation}
\end{proposition}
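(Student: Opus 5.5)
The bound follows by identifying $G-D_fGD_f$ with a block matrix built from $G[A,B]$ and then invoking the estimate $\norm{G[A,B]}_1\le\frac N2\cZ_W(f)$ proved just before the statement.

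Order the alphabet so that the rows and columns of $A$ come first, followed by those of $B$. Then $D_f=\diag(I_A,-I_B)$. Conjugation by $D_f$ leaves the diagonal blocks $G[A,A]$ and $G[B,B]$ unchanged and negates the off-diagonal blocks. Hence
\[
 G-D_fGD_f=2\begin{pmatrix}0&G[A,B]\\ G[B,A]&0\end{pmatrix},
 \qquad G[B,A]=G[A,B]^{\mathsf T},
\]
using the symmetry of $G$.

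The next step is the standard fact that the symmetric block matrix $H=\begin{pmatrix}0&M\\ M^{\mathsf T}&0\end{pmatrix}$ has eigenvalues $\pm s_j(M)$, together with zeros. Consequently $\norm H_1=2\norm M_1$. To verify this, take a singular value decomposition $M=\sum_j s_ju_jv_j^{\mathsf T}$. The vectors $(u_j,\pm v_j)/\sqrt2$ are orthonormal eigenvectors of $H$ with eigenvalues $\pm s_j$. On the orthogonal complement of their span, which consists of vectors $(u,v)$ with $M^{\mathsf T}u=0$ and $Mv=0$, the matrix $H$ vanishes. Since $H$ is symmetric, its nuclear norm is the sum of its absolute eigenvalues, which equals $2\sum_j s_j(M)$.

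Combining these two facts gives
\[
 \norm{G-D_fGD_f}_1=4\norm{G[A,B]}_1\le4\cdot\frac N2\cZ_W(f)=2N\,\cZ_W(f).
\]
Dividing by $2N$ yields \eqref{eq:gram-bound}.

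The degenerate cases need no separate argument. If $A$ or $B$ is empty, then both sides vanish or the bound is trivial.

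The analytic content of the proposition is the integral triangle inequality, already established above. This includes the integrability bound $\norm{a_y}_2\norm{b_y}_2\le(S_A+S_B)/2$, which justifies moving the nuclear norm inside the integral for the matrix-valued integrand. Given that inequality, the only remaining steps are the block identification and the eigenvalue computation for $H$, and both are routine.
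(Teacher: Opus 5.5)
Your proposal is correct and matches the paper's proof: the same block form $G-D_fGD_f=\begin{pmatrix}0&2C\\2C^{\mathsf T}&0\end{pmatrix}$ with $C=G[A,B]$, eigenvalues $\pm2s_j(C)$ giving nuclear norm $4\norm{C}_1$, and then the preceding integral estimate $\norm{C}_1\le\frac N2\cZ_W(f)$. Your singular-value-decomposition check of the block spectrum only fills in a step the paper states without proof.
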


\begin{proof}
Order the inputs with $A$ first and $B$ second, and put $C=G[A,B]$.
Then
\[
 G-D_fGD_f=\begin{pmatrix}0&2C\\2C^{\mathsf T}&0\end{pmatrix}.
\]
Its nonzero eigenvalues are $\pm2s_j(C)$, so its nuclear norm is
$4\norm C_1$. The preceding bound proves~\eqref{eq:gram-bound}.
If $f$ is constant, both sides are zero.
\end{proof}

\subsection{Exchanging two invariant subspaces}\label{sec:exchange}

A subspace is invariant under a matrix if the matrix maps it into
itself. We will use a symmetric orthogonal map $D$ that exchanges two
such spaces. Orthogonality preserves lengths, and symmetry then gives
$D^2=I$. All eigenvalue lists include multiplicities.

The following comparison follows from the identity-function case
of~\cite[Theorem~1]{Audenaert14}: add the same sufficiently large
multiple of $I$ to both matrices, making them positive semidefinite
without changing either difference. Their singular values are then
their eigenvalues. We include a direct proof.
\begin{lemma}[Nuclear-norm eigenvalue inequality]\label{lem:mirsky}
If $A,B$ are real symmetric matrices of the same size and their
eigenvalues $a_j,b_j$ are in nonincreasing order, then
$\sum_j|a_j-b_j|\le\norm{A-B}_1$.
\end{lemma}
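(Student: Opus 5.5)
We reduce the statement to the two special cases of a positive semidefinite difference and its negation, using the Jordan decomposition of $A-B$, the Courant--Fischer minimax characterization, and an ordered matching on the line.

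\emph{Step 1: Jordan splitting.} Write $A-B=P-Q$, where $P,Q\succeq0$ are the positive and negative parts of the symmetric matrix $A-B$. Since $PQ=0$, we have $\norm{A-B}_1=\Tr P+\Tr Q$.

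\emph{Step 2: an intermediate matrix.} Set $C=A+Q=B+P$. Then $C\succeq A$ and $C\succeq B$ in the Loewner order. Let $c_j$ be the eigenvalues of $C$ in nonincreasing order. By Courant--Fischer (Weyl monotonicity), $c_j\ge a_j$ and $c_j\ge b_j$ for every $j$.

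\emph{Step 3: trace identities.} Using these inequalities and the trace,
\[
\sum_j(c_j-a_j)=\Tr(C-A)=\Tr Q,\qquad \sum_j(c_j-b_j)=\Tr P .
\]

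\emph{Step 4: the triangle inequality.} Apply it termwise:
\[
|a_j-b_j|\le |c_j-a_j|+|c_j-b_j|=(c_j-a_j)+(c_j-b_j).
\]
Summing over $j$ gives
\[
\sum_j|a_j-b_j|\le \Tr P+\Tr Q=\norm{A-B}_1 .
\]

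The only nontrivial ingredient is Weyl monotonicity in Step 2: if $C-A\succeq0$, then $\lambda_j(C)\ge\lambda_j(A)$ for all $j$. This follows directly from the minimax formula
\[
\lambda_j(M)=\max_{\dim V=j}\ \min_{v\in V,\ \norm v_2=1} v^{\mathsf T}Mv,
\]
since $v^{\mathsf T}Cv\ge v^{\mathsf T}Av$ for every unit vector $v$.

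The main point needing care is choosing the intermediate matrix $C$ so that it dominates both $A$ and $B$ while keeping the trace gaps exactly $\Tr Q$ and $\Tr P$. A naive common upper bound would lose the equality $\norm{A-B}_1=\Tr P+\Tr Q$. The splitting from Step 1 gives this equality, so the bound is tight.
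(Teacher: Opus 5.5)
Your proposal is correct and follows the paper's proof: the same splitting $A-B=H_+-H_-$, the same intermediate matrix $C=B+H_+=A+H_-$ dominating both, eigenvalue monotonicity $c_j\ge a_j,b_j$, and the trace identity $\sum_j(2c_j-a_j-b_j)=\norm{A-B}_1$. The opening mention of an ordered matching on the line is never used in your argument and can be dropped.
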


\begin{proof}
Let $H=A-B$. Diagonalizing $H$, separate its positive and negative
eigenvalues to write $H=H_+-H_-$ with $H_+,H_-$ positive semidefinite
and $\Tr H_++\Tr H_-=\norm H_1$. Put $C=B+H_+=A+H_-$, and let
$c_j$ be its ordered eigenvalues. Adding a positive semidefinite matrix
cannot decrease any ordered eigenvalue, by the variational
characterization of eigenvalues. Thus $c_j\ge a_j,b_j$, and
\[
 \sum_j|a_j-b_j|\le\sum_j(2c_j-a_j-b_j)
 =2\Tr C-\Tr A-\Tr B
 =\norm H_1.\qedhere
\]
\end{proof}

Let $G$ be a real symmetric matrix preserving an orthogonal splitting
$\mathcal H_+\oplus\mathcal H_-$ into equal-dimensional spaces. Let
$G_+$ and $G_-$ be its restrictions, with ordered eigenvalues $a_j$
and $b_j$. These objects are fixed before the exchange map is chosen.

\begin{lemma}[Exchange of invariant subspaces]\label{prop:spectral}
If a symmetric orthogonal map $D$ exchanges $\mathcal H_+$ and
$\mathcal H_-$, then
\[
 \frac12\norm{G-DGD}_1\ge\sum_j|a_j-b_j|.
\]
The right-hand side is the minimum over all such exchange maps.
\end{lemma}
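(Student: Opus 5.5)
We work in an orthonormal basis adapted to $\mathcal H_+\oplus\mathcal H_-$. There $G=\diag(G_+,G_-)$. Since $D$ is symmetric, orthogonal and exchanges the two spaces, it has the block form
\[
 D=\begin{pmatrix}0&U^{\mathsf T}\\U&0\end{pmatrix},
\]
where $U:\mathcal H_+\to\mathcal H_-$ is orthogonal. Symmetry of $D$ forces the upper-right block to be the transpose of the lower-left one, and $D^2=I$ is then automatic. A direct computation gives
\[
 DGD=\diag\bigl(U^{\mathsf T}G_-U,\;UG_+U^{\mathsf T}\bigr),
\]
so
\[
 G-DGD=\diag\bigl(G_+-U^{\mathsf T}G_-U,\;G_--UG_+U^{\mathsf T}\bigr).
\]
The nuclear norm of a block-diagonal matrix is the sum of the nuclear norms of its blocks. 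The second block equals $-U(G_+-U^{\mathsf T}G_-U)U^{\mathsf T}$, which is orthogonally conjugate to the negative of the first block, so it has the same nuclear norm. Hence
\[
 \tfrac12\norm{G-DGD}_1=\norm{G_+-U^{\mathsf T}G_-U}_1.
\]

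For the inequality, apply Lemma~\ref{lem:mirsky} to the symmetric matrices $G_+$ and $U^{\mathsf T}G_-U$. The second is orthogonally similar to $G_-$, so its ordered eigenvalues are exactly the $b_j$. The lemma then gives $\norm{G_+-U^{\mathsf T}G_-U}_1\ge\sum_j|a_j-b_j|$, which is the claimed bound.

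For the minimality statement, we construct an exchange map attaining equality. Choose orthonormal eigenvectors $e_j$ of $G_+$ with eigenvalues $a_j$, and $g_j$ of $G_-$ with eigenvalues $b_j$, both in nonincreasing order. Let $U$ be the orthogonal map sending $e_j\mapsto g_j$. Then $U^{\mathsf T}G_-U$ is diagonal in the basis $(e_j)$ with entries $b_j$. So $G_+-U^{\mathsf T}G_-U$ is diagonal with entries $a_j-b_j$, and its nuclear norm is exactly $\sum_j|a_j-b_j|$. The corresponding $D$ is symmetric and orthogonal and exchanges the two spaces, so equality is attained. Combined with the inequality, the right-hand side is the minimum over all such $D$.

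The only step needing care is the block description of $D$. We must check that symmetry plus orthogonality plus exchanging the subspaces forces exactly the form above with $U$ orthogonal, and that equal dimensions are what make $U$ square. Everything else is bookkeeping, with Lemma~\ref{lem:mirsky} doing the real work.
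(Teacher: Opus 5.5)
Your proposal is correct and follows the paper's proof essentially line for line. The ingredients match: the same block form of $D$ (you orient $U$ from $\mathcal H_+$ to $\mathcal H_-$, the paper the reverse), the observation that the two diagonal blocks of $G-DGD$ are orthogonally conjugate negatives of each other, Lemma~\ref{lem:mirsky} for the bound, and alignment of ordered eigenvectors for attainment.
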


\begin{proof}
In this splitting, for an orthogonal map $U:\mathcal H_-\to\mathcal H_+$,
\[
 G=\begin{pmatrix}G_+&0\\0&G_-\end{pmatrix},\qquad
 D=\begin{pmatrix}0&U\\U^{\mathsf T}&0\end{pmatrix}.
\]
Put $K=G_+-UG_-U^{\mathsf T}$. Then $G-DGD$ has diagonal blocks
$K$ and $-U^{\mathsf T}KU$. Their nuclear norms agree, so
$\tfrac12\norm{G-DGD}_1=\norm K_1\ge\sum_j|a_j-b_j|$ by
Lemma~\ref{lem:mirsky}. To attain equality, choose $U$ to map ordered
eigenvectors of $G_-$ to ordered eigenvectors of $G_+$.
\end{proof}

Allowing every such orthogonal map lets us choose how the eigenbases
on the two invariant subspaces are aligned. The minimum therefore depends
only on their eigenvalue lists; neither positivity nor a product
structure is needed.

\section{Proof of the main theorems}\label{sec:mainproof}

We return to the cube and put $N=2^n$. For a product channel, integration
of the square-root densities factors across coordinates, giving
$G(x,x')=\prod_{i:x_i\ne x'_i}Z_i$. For arbitrary coordinate affinities
$\boldsymbol z=(z_1,\ldots,z_n)\in[0,1]^n$, use $\otimes$ for the
Kronecker product of matrices and write
\[
 G_{\boldsymbol z}=\bigotimes_{i=1}^n
 \begin{pmatrix}1&z_i\\z_i&1\end{pmatrix},\qquad
 q_f(\boldsymbol z)=\frac{\norm{G_{\boldsymbol z}-D_fG_{\boldsymbol z}D_f}_1}{2N}.
\]
Thus $\Tr G_{\boldsymbol z}=N$ and
$\cZ_W(f)\ge q_f(Z_1,\ldots,Z_n)$ by Proposition~\ref{prop:gram}.
When every coordinate affinity equals $h$,
write $G_h$ and $q_f(h)$. In a binary symmetric channel
$h=\sqrt{1-\rho^2}$; the coordinate affinity $h$ and the noise correlation
$\rho$ play different roles.

\subsection{Even and odd Fourier subspaces}\label{sec:anticommute}

For $T\subseteq[n]$, the Walsh (parity) function is
$\chi_T(x)=\prod_{i\in T}x_i$, with $\chi_\varnothing=1$.
The vectors $N^{-1/2}(\chi_T(x))_x$ form an orthonormal basis of
$\R^N$. The degree of this basis vector is $|T|$. Let
$(Ju)(x)=u(-x)$ be the global flip, and let $\mathcal H_+$ and
$\mathcal H_-$ be its even and odd subspaces: $Ju=u$ and $Ju=-u$,
respectively. Because $\chi_T(-x)=(-1)^{|T|}\chi_T(x)$, even degrees
span $\mathcal H_+$ and odd degrees span $\mathcal H_-$. Each has
dimension $N/2$.

The Gram entries are unchanged when both indices are reversed, so
$G_{\boldsymbol z}J=JG_{\boldsymbol z}$ and both spaces are invariant.
Antipodality gives the opposite relation,
\begin{equation}\label{eq:anticommutation}
 D_fJ=-JD_f.
\end{equation}
This relation is called anticommutation: multiplication by $f$ sends
even functions to odd functions and conversely. Since $f^2=1$,
the matrix $D_f$ is symmetric and orthogonal, as required by
Lemma~\ref{prop:spectral}. Hence
$Nq_f(\boldsymbol z)\ge\sum_j|a_j-b_j|$, where $a_j,b_j$ are the
ordered eigenvalues on the two subspaces. The right-hand side no longer
depends on $f$.

\subsection{The exact matching cost}\label{sec:cost}

For equal coordinate affinities $h$, each $2\times2$ factor has
eigenvectors $(1,1)^{\mathsf T}$ and $(1,-1)^{\mathsf T}$, with eigenvalues
$1+h$ and $1-h$. Their Kronecker products show that each Walsh function
of degree $d$ is an eigenvector of $G_h$ with eigenvalue
\[
 \lambda_d=(1+h)^{n-d}(1-h)^d,
 \qquad\text{with multiplicity }\binom nd.
\]
Thus the even list $a$ and odd list $b$ contain alternate degree levels.
To compare these lists, it is useful to count how many eigenvalues lie
above each threshold. The following identity for finite ordered lists~\cite{Vallender74}
expresses the matching cost in terms of these counts and also applies
to other pairs of spectra.

\begin{lemma}[Matching by threshold counts]\label{lem:counts}
For two nonnegative lists of equal length, write $a_j,b_j$ in
nonincreasing order. The minimum total absolute difference over all
pairings is
\[
 \sum_j|a_j-b_j|
 =\int_0^\infty
 \left|\#\{j:a_j>t\}-\#\{j:b_j>t\}\right|\,dt.
\]
\end{lemma}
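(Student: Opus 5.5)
The statement makes two claims. First, the sorted pairing attains the minimum total absolute difference over all pairings. Second, its cost equals the integral of the absolute difference of the two threshold counts. I would prove the integral identity first and derive minimality from it.

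For the identity, write each nonnegative number via the layer-cake formula $c=\int_0^\infty \one[c>t]\,dt$. For two nonnegative numbers this gives $|a-b|=\int_0^\infty|\one[a>t]-\one[b>t]|\,dt$, since the integrand is $1$ exactly on the interval between $\min(a,b)$ and $\max(a,b)$. Hence for any pairing $\sigma$,
\[
 \sum_j|a_j-b_{\sigma(j)}|=\int_0^\infty\sum_j\bigl|\one[a_j>t]-\one[b_{\sigma(j)}>t]\bigr|\,dt ,
\]
where the exchange of the finite sum and the integral is trivial.

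Next compare the integrand with the counts. Fix $t$ and put $\alpha(t)=\#\{j:a_j>t\}$ and $\beta(t)=\#\{j:b_j>t\}$. The integrand $\sum_j|\one[a_j>t]-\one[b_{\sigma(j)}>t]|$ counts the indices where exactly one of the two indicators equals $1$. By the triangle inequality this is at least $|\sum_j\one[a_j>t]-\sum_j\one[b_{\sigma(j)}>t]|=|\alpha(t)-\beta(t)|$. Integrating, every pairing has cost at least $\int_0^\infty|\alpha-\beta|\,dt$.

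Now take the sorted pairing, $\sigma=\mathrm{id}$ with both lists nonincreasing. The indices with $a_j>t$ form the initial segment $\{1,\dots,\alpha(t)\}$, and likewise those with $b_j>t$ form $\{1,\dots,\beta(t)\}$. Two nested initial segments have symmetric difference of size exactly $|\alpha(t)-\beta(t)|$, so equality holds pointwise in $t$. Therefore the sorted cost equals the integral, and that integral is a lower bound for every pairing. This gives both the displayed identity and the minimality claim.

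Equal length is used implicitly, so that the lists can be paired at all. Nonnegativity is what allows the integral to start at $0$; for general real lists one would integrate over all of $\R$, but that is not needed here. The only step needing care is the pointwise equality for sorted lists, which is the nested-initial-segment observation. I do not expect any real obstacle.
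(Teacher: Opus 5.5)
Your proof is correct and follows the same argument as the paper. The paper's proof also counts the pairs that cross each threshold $t$: every pairing has at least $|\alpha(t)-\beta(t)|$ of them, and the ordered pairing has exactly that many because its above-threshold indices form initial segments; integrating over $t$ gives the result, and your layer-cake identity makes that sketch precise.
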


\begin{proof}
At a threshold $t$, the excess number of points in either list above
$t$ must be paired across that threshold. Ordered pairing has exactly
this many crossings, since its above-threshold indices are initial
segments. Integrate over $t$: the length of each paired interval is
the absolute difference of its endpoints. This proves both the lower
bound for any pairing and equality for the ordered one.
\end{proof}

\begin{lemma}[Parity matching]\label{lem:cost}
The even and odd eigenvalue lists of $G_h$ satisfy
$\sum_j|a_j-b_j|=Nh$ for $0\le h\le1$.
\end{lemma}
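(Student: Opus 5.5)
Since the left-hand side is the ordered matching cost, which by Lemma~\ref{lem:counts} is the minimum over all pairings, the plan is to bound it in both directions. An explicit pairing gives the upper bound $Nh$, and the threshold-count integral gives the matching lower bound.

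\emph{Upper bound via a dictator pairing.} Fix coordinate $1$ and pair each Walsh function $\chi_T$ with $1\notin T$ with $\chi_{T\cup\{1\}}$. This is a bijection between subsets of $[n]\setminus\{1\}$ and subsets containing $1$, and it changes parity. It therefore pairs the even list with the odd list. A pair of degrees $(d,d+1)$ with $0\le d\le n-1$ contributes
\[
 \lambda_d-\lambda_{d+1}=(1+h)^{n-1-d}(1-h)^d\cdot 2h,
\]
and there are $\binom{n-1}{d}$ such pairs. By the binomial theorem the total is $2h\,(1+h+1-h)^{n-1}=2h\cdot2^{n-1}=Nh$. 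Hence the ordered cost is at most $Nh$.

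\emph{Lower bound via threshold counts.} Since $\lambda_0\ge\lambda_1\ge\dots\ge\lambda_n\ge0$ for $0\le h\le1$, the count function is explicit. For $t\in(\lambda_{k+1},\lambda_k)$, the count of eigenvalues above $t$ in the even list is $E_k=\sum_{d\le k,\,d\text{ even}}\binom nd$, and in the odd list it is $O_k=\sum_{d\le k,\,d\text{ odd}}\binom nd$. The alternating partial sums satisfy
\[
 E_k-O_k=\sum_{d\le k}(-1)^d\binom nd=(-1)^k\binom{n-1}{k},
\]
so the absolute difference equals $\binom{n-1}k$. For $t>\lambda_0$ both counts vanish, and for $t<\lambda_n$ the difference is $\binom{n-1}{n}=0$. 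The integral of Lemma~\ref{lem:counts} is therefore
\[
 \sum_{k=0}^{n-1}\binom{n-1}{k}(\lambda_k-\lambda_{k+1}),
\]
which is exactly the sum already computed, namely $Nh$. This alone proves the identity, and the dictator pairing then merely confirms attainment. Degenerate cases ($h=0$, where all eigenvalues coincide, and $h=1$, where $\lambda_d=0$ for $d\ge1$) are covered because the intervals simply have zero length.

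\emph{Main obstacle.} The only delicate point is the bookkeeping of the count function when eigenvalues coincide across levels. This is harmless because the integrand is used only almost everywhere. The key combinatorial input is the alternating binomial identity, which I would verify by Pascal's rule and induction on $k$.
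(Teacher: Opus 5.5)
Your proof is correct and is essentially the paper's: both evaluate the threshold-count integral of Lemma~\ref{lem:counts} gap by gap and arrive at the same sum $\sum_{k=0}^{n-1}\binom{n-1}{k}(\lambda_k-\lambda_{k+1})=Nh$. The only difference is how the gap count $|E_k-O_k|=\binom{n-1}{k}$ is obtained: you use the alternating binomial identity, whereas the paper reads it off from the dictator pairing, whose $\binom{n-1}{k}$ pairs crossing gap $k$ all have upper endpoints in the same parity list while pairs wholly above the gap contribute equally to both counts, which also shows that this pairing is optimal.
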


\begin{proof}
We first identify an optimal pairing, then compute its cost.
Suppose $0<h<1$, so the levels strictly decrease. Fix a coordinate
$i$ and pair $R$ with $R\cup\{i\}$ for every
$R\subseteq[n]\setminus\{i\}$. Each pair joins one even degree to
one odd degree, and the two degrees are adjacent.

Consider a threshold between $\lambda_{d+1}$ and $\lambda_d$.
Only pairs joining degrees $d$ and $d+1$ cross it. All their upper
endpoints therefore belong to the same parity list. Pairs wholly
above the threshold contribute equally to the two counts, so the
number of crossing pairs is exactly the absolute difference of those
counts. The pairing attains the threshold lower bound in
Lemma~\ref{lem:counts} at every gap, and is optimal.

There are $\binom{n-1}{d}$ pairs joining degrees $d$ and $d+1$.
Their total cost is
\[
 \sum_j|a_j-b_j|
 =\sum_{d=0}^{n-1}\binom{n-1}{d}
    2h(1+h)^{n-1-d}(1-h)^d
 =2h\,2^{n-1}=Nh.
\]
At $h=0$, both lists consist of ones. At $h=1$, the even list has
one eigenvalue $N$ and all remaining eigenvalues in both lists are
zero. These give the same formula at the endpoints.
\end{proof}

Multiplication by $x_i$ realizes this optimal pairing: it sends
$\chi_T$ to the Walsh function obtained by adding or removing $i$
from $T$. Thus the exchange used to compute the lower bound is
itself induced by a Boolean function, namely a dictator.

\begin{theorem}[Equal coordinate affinities]\label{thm:nuclear}
Every antipodal Boolean function satisfies $q_f(h)\ge h$ for
$0\le h\le1$, with equality for signed dictators.
\end{theorem}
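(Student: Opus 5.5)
The lower bound comes straight from the chain already set up in Section~\ref{sec:anticommute}. For an antipodal $f$, the matrix $D_f$ is symmetric and orthogonal since $f^2=1$. By the anticommutation relation~\eqref{eq:anticommutation}, $D_f$ exchanges the even and odd subspaces $\mathcal H_+$ and $\mathcal H_-$. Both subspaces are invariant under $G_h$, because $G_hJ=JG_h$, and each has dimension $N/2$. Lemma~\ref{prop:spectral} therefore gives
\[
 Nq_f(h)=\tfrac12\norm{G_h-D_fG_hD_f}_1\ge\sum_j|a_j-b_j|,
\]
where $a_j,b_j$ are the ordered eigenvalues of $G_h$ on $\mathcal H_+$ and $\mathcal H_-$. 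Lemma~\ref{lem:cost} evaluates the right-hand side as $Nh$ for all $0\le h\le1$. Dividing by $N$ gives $q_f(h)\ge h$.

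For equality it suffices to treat $f(x)=x_i$, since $D_{-f}G D_{-f}=D_fGD_f$ and the sign does not matter. I would compute $G_h-D_fG_hD_f$ directly in the Walsh basis. Multiplication by $x_i$ maps $\chi_T$ to $\chi_{T\triangle\{i\}}$, so $D_fG_hD_f$ is diagonal in this basis. On $\chi_T$ it has eigenvalue $\lambda_{|T\triangle\{i\}|}$, while $G_h$ has eigenvalue $\lambda_{|T|}$ there. The difference is therefore diagonal with entries $\lambda_{|T|}-\lambda_{|T\triangle\{i\}|}$. Its nuclear norm is the sum of the absolute values of these entries.

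Now group the sets in pairs $R,R\cup\{i\}$ with $R\subseteq[n]\setminus\{i\}$. Each such pair contributes
\[
 2\,|\lambda_{|R|}-\lambda_{|R|+1}|=2\cdot 2h(1+h)^{n-1-|R|}(1-h)^{|R|}.
\]
Summing over $R$ gives $2Nh$ by the binomial computation in Lemma~\ref{lem:cost}. Hence $q_f(h)=2Nh/(2N)=h$.

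This is just the observation after Lemma~\ref{lem:cost} that the dictator realizes the optimal pairing, now checked explicitly. It holds at the endpoints $h=0,1$ as well, because the formula is polynomial in $h$.

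No step here is a real obstacle, since the spectral work is done in the preceding lemmas. The only points needing care are confirming the hypotheses of Lemma~\ref{prop:spectral}: the equal dimensions $N/2$ and the fact that $D_f$ maps $\mathcal H_+$ onto $\mathcal H_-$. The second follows from $D_fJ=-JD_f$ together with the invertibility of $D_f$.
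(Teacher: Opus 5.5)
Your proof is correct and is essentially the paper's: the lower bound combines Lemma~\ref{prop:spectral} with Lemma~\ref{lem:cost}, and your Walsh-basis diagonalization of $G_h-D_fG_hD_f$ for $f=x_i$ spells out the paper's remark that multiplication by $x_i$ realizes the optimal pairing. The appeal to polynomiality at $h=0,1$ is unnecessary: $\lambda_d-\lambda_{d+1}=2h(1+h)^{n-1-d}(1-h)^d\ge0$, so your computation holds verbatim on all of $[0,1]$.
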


\begin{proof}
Apply Lemmas~\ref{prop:spectral} and~\ref{lem:cost}. For a
dictator, the preceding coordinate pairing is its actual action on
the Walsh basis. Thus
$\tfrac12\norm{G_h-D_fG_hD_f}_1=Nh$ and $q_f(h)=h$.
Changing the sign of the dictator leaves $D_fG_hD_f$ unchanged.
\end{proof}

Taking $h=\sqrt{1-\rho^2}$ in the Gram matrix bound proves
Theorem~\ref{thm:main}. For its equality case, the posterior mean of
a signed dictator is $\pm\rho Y_i$, so its affinity is exactly
$\sqrt{1-\rho^2}$.

\subsection{Unequal coordinate channels}

For equal coordinate affinities, the eigenvalues are ordered by degree.
For unequal affinities that ordering may fail. We therefore decrease each
affinity to the smallest one. The following monotonicity property
justifies this step and holds for every Boolean function.

\begin{lemma}[Coordinatewise monotonicity]\label{lem:monotonicity}
For every Boolean function $f$, the map $q_f(\boldsymbol z)$ is
nondecreasing in each coordinate on $[0,1]^n$.
\end{lemma}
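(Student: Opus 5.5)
The plan is to follow the hint from the overview: averaging the Gram matrix with a sign-reversed copy. Fix a coordinate $i$ and values $0\le z'_i<z_i\le1$, with all other affinities fixed. Let $S_i=\diag(x_i:x\in\Q^n)$ denote multiplication by the coordinate function $x_i$. On the $i$-th Kronecker factor, $S_i$ acts as $\diag(1,-1)$, and conjugating $\begin{pmatrix}1&z\\z&1\end{pmatrix}$ by it gives $\begin{pmatrix}1&-z\\-z&1\end{pmatrix}$. The other factors are unchanged. Hence, with $\boldsymbol z^{(i)}(t)$ denoting $\boldsymbol z$ with its $i$-th entry replaced by $t$,
\[
 S_iG_{\boldsymbol z^{(i)}(z)}S_i=G_{\boldsymbol z^{(i)}(-z)}.
\]
The map $t\mapsto G_{\boldsymbol z^{(i)}(t)}$ is affine in $t$, because the Kronecker product is linear in each factor. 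Write $z'=\lambda z+(1-\lambda)(-z)$ with $\lambda=(1+z'/z)/2\in[1/2,1]$. Then
\[
 G_{\boldsymbol z^{(i)}(z')}=\lambda G_{\boldsymbol z^{(i)}(z)}+(1-\lambda)S_iG_{\boldsymbol z^{(i)}(z)}S_i .
\]
This representation needs $z>0$; the case $z=0$ is trivial, since then $z'=z$.

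Next I use that diagonal sign matrices commute: $D_f$ commutes with $S_i$. So the map $M\mapsto M-D_fMD_f$ is linear and commutes with conjugation by $S_i$. Writing $G=G_{\boldsymbol z^{(i)}(z)}$, this gives
\[
 G'-D_fG'D_f=\lambda(G-D_fGD_f)+(1-\lambda)S_i(G-D_fGD_f)S_i .
\]
The triangle inequality for the nuclear norm, together with its invariance under the orthogonal conjugation by $S_i$, yields $\norm{G'-D_fG'D_f}_1\le\norm{G-D_fGD_f}_1$. Dividing by $2N$ gives $q_f(\boldsymbol z^{(i)}(z'))\le q_f(\boldsymbol z^{(i)}(z))$, which is the claimed monotonicity in coordinate $i$. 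The argument uses nothing about $f$ beyond $D_f$ being diagonal, so it holds for every Boolean function, as the statement requires.

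The only step needing care is extending the parametrization to the negative value $-z$. The matrix $G_{\boldsymbol z^{(i)}(-z)}$ is no longer a Gram matrix of a channel, and $q_f$ is only defined on $[0,1]^n$. However, the argument never evaluates $q_f$ at a negative affinity. It only uses the expression $\norm{M-D_fMD_f}_1$ for the real symmetric matrix $M=S_iGS_i$, and the identity for $G'$ holds purely algebraically. The convex-combination identity follows from affinity in $t$, and I would verify it directly on the single Kronecker factor. After that, the remaining steps are the routine checks above.
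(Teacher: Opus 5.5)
Your proof is correct and follows the paper's argument. The paper writes the decreased affinity as $cz_i$ with $0\le c\le1$ and uses the same convex combination $G_{\boldsymbol z'}=\frac{1+c}{2}G_{\boldsymbol z}+\frac{1-c}{2}R_iG_{\boldsymbol z}R_i$; your $\lambda=(1+c)/2$, and your passage through the formal value $-z$ is just one way to verify this identity. It then finishes exactly as you do, using $R_iD_f=D_fR_i$, the triangle inequality, and orthogonal invariance of the nuclear norm.
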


\begin{proof}
Let $R_i=\diag(x_i:x\in\Q^n)$, and obtain $\boldsymbol z'$ from
$\boldsymbol z$ by replacing $z_i$ with $cz_i$, where $0\le c\le1$.
The coordinate factorization gives
\[
 G_{\boldsymbol z'}=\frac{1+c}{2}G_{\boldsymbol z}
                  +\frac{1-c}{2}R_iG_{\boldsymbol z}R_i.
\]
Since $R_iD_f=D_fR_i$, the same convex combination holds for
$\Delta_f(\boldsymbol z)=G_{\boldsymbol z}-D_fG_{\boldsymbol z}D_f$.
Convexity and invariance under orthogonal changes of coordinates imply
$\norm{\Delta_f(\boldsymbol z')}_1\le\norm{\Delta_f(\boldsymbol z)}_1$.
Any decrease of a positive coordinate has this form; a zero coordinate
needs no change.
\end{proof}

\begin{proof}[Proof of Theorem~\ref{thm:channels}]
Set $h=\min_iZ_i$. Proposition~\ref{prop:gram},
Lemma~\ref{lem:monotonicity}, and Theorem~\ref{thm:nuclear} give
$\cZ_W(f)\ge q_f(Z_1,\ldots,Z_n)\ge q_f(h)\ge h$.
For $f(x)=x_j$, the conditional output laws differ only in coordinate
$j$, and affinity is multiplicative under products; hence $\cZ_W(f)=Z_j$.
\end{proof}

This monotonicity concerns the matrix lower bound; it does not require
the coordinate channels to be ordered by stochastic degradation.
The distinction matters because directly matching the spectra for unequal
coordinate affinities can lose the sharp constant.
For example, $\boldsymbol z=(1/4,1/2,1/2,3/4)$ has normalized
matching cost $15/64<\min_i z_i=1/4$; the exact calculation is in
Appendix~\ref{app:heterogeneous}.
Decreasing the affinities while keeping $D_f$ fixed preserves the
restriction that $D_f$ is multiplication by a Boolean function.
Unrestricted eigenvalue matching can lose this restriction.

\section{Other sign-reversing symmetries}\label{sec:extensions}

We retain uniform input bits and the product channel
$W=\bigotimes_i W_i$. The proof also applies when reversing a fixed
subset of coordinates changes the sign of $f$. Concavity then gives
the corresponding bound for a randomized binary function of $X$.

\subsection{A fixed subset of reversed coordinates}

For $S\subseteq[n]$, write $x^{\oplus S}$ for the input obtained by
reversing coordinates in $S$. If this flip reverses $f$, freezing the
coordinates outside $S$ leaves an antipodal function on each slice. A nonempty flip has
no fixed points, so it also pairs the two classes.

\begin{theorem}[Sign reversal under coordinate flips]\label{thm:fixedflip}
If a nonempty set $S$ satisfies $f(x^{\oplus S})=-f(x)$ for every
$x$, then $f$ is balanced and $\cZ_W(f)\ge\min_{i\in S}Z_i$.
The stronger matrix bound $q_f(\boldsymbol z)\ge\min_{i\in S}z_i$
also holds for every $\boldsymbol z\in[0,1]^n$.
\end{theorem}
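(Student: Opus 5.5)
Balance follows as in the antipodal case: the flip $x\mapsto x^{\oplus S}$ is a fixed-point-free involution on $\Q^n$ (since $S\ne\varnothing$) that pairs each point of $A=f^{-1}(1)$ with a point of $B=f^{-1}(-1)$, so $|A|=|B|$. For the channel bound, Proposition~\ref{prop:gram} gives $\cZ_W(f)\ge q_f(Z_1,\ldots,Z_n)$. It therefore suffices to prove the matrix bound $q_f(\boldsymbol z)\ge\min_{i\in S}z_i$ for all $\boldsymbol z\in[0,1]^n$.

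The plan is to reduce to the antipodal case by sending the affinities outside $S$ to zero. By Lemma~\ref{lem:monotonicity}, which holds for every Boolean $f$, we may replace $z_i$ by $0$ for $i\notin S$ and $z_i$ by $h=\min_{i\in S}z_i$ for $i\in S$ without increasing $q_f$. With these affinities the Gram matrix factors as
\[
G=G^{(S)}_h\otimes I_{2^{n-|S|}},
\]
where coordinates outside $S$ contribute identity factors. Hence $G$ is block diagonal over the slices obtained by fixing $x_{[n]\setminus S}=u$. The matrix $D_f$ is also block diagonal over the same slices, with diagonal blocks $D_{f_u}$, where $f_u(v)=f(v,u)$ is the restriction to the slice.

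The nuclear norm is additive over a block-diagonal decomposition, so
\[
\norm{G-D_fGD_f}_1=\sum_u\norm{G^{(S)}_h-D_{f_u}G^{(S)}_hD_{f_u}}_1 .
\]
Each $f_u$ is a Boolean function on $\Q^{S}$ satisfying $f_u(-v)=-f_u(v)$, because the hypothesis $f(x^{\oplus S})=-f(x)$ flips exactly the $S$-coordinates and leaves $u$ fixed. Theorem~\ref{thm:nuclear}, applied on the cube of dimension $|S|\ge1$, bounds each summand below by $2\cdot2^{|S|}h$. Summing over the $2^{n-|S|}$ slices gives $\norm{G-D_fGD_f}_1\ge 2Nh$, that is, $q_f\ge h$.

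The step to check carefully is the use of monotonicity down to $z_i=0$. Lemma~\ref{lem:monotonicity} covers all of $[0,1]^n$ including this endpoint: taking $c=0$ gives $G'=\tfrac12(G+R_iGR_i)$, and this is still a convex combination of orthogonal conjugates. So the reduction is legitimate.

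Attainment is not asserted in the theorem. It nonetheless holds for $f(x)=x_j$ with $j\in S$ chosen to minimize $Z_j$, by the computation in the proof of Theorem~\ref{thm:channels}. The remaining work is bookkeeping of the Kronecker ordering, to confirm that both $G$ and $D_f$ are block diagonal with respect to the same slice decomposition.
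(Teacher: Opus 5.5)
Your proof is correct and follows the paper's argument. Both proofs use Lemma~\ref{lem:monotonicity} to send the affinities outside $S$ to zero, decompose the Gram difference into blocks over the slices $u$, and apply Theorem~\ref{thm:nuclear} to each antipodal slice $f_u$. The only cosmetic difference is that you lower the $S$-affinities to $h$ globally before slicing, whereas the paper applies monotonicity separately within each slice.
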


\begin{proof}
Put $R=[n]\setminus S$ and $f_u(v)=f(v,u)$ for $u\in\Q^R$.
Each slice is antipodal on $\Q^S$. Decrease the coordinates $z_i$ for
$i\in R$ to zero, leaving those in $S$ unchanged, and denote the result
by $\boldsymbol z'$. The zero-affinity factors are identity matrices, so
ordering inputs by $u$ makes the Gram difference block diagonal. Hence
\[
 q_f(\boldsymbol z)\ge q_f(\boldsymbol z')
 =2^{-|R|}\sum_{u\in\Q^R}q_{f_u}(\boldsymbol z_S)
 \ge\min_{i\in S}z_i.
\]
The last inequality applies Lemma~\ref{lem:monotonicity} and
Theorem~\ref{thm:nuclear} to each slice. The Gram matrix bound proves
the channel assertion. Pairing $x$ with $x^{\oplus S}$ proves
balancedness.
\end{proof}

In channel terms, setting the affinities outside $S$ to zero corresponds
to revealing those input bits. Conditional Jensen says this can only
decrease the affinity. The block calculation proves the stronger
matrix version of the same observation.
The same symmetry has a Fourier description. The Fourier--Walsh expansion is
$f(x)=\sum_{T\subseteq[n]}\widehat f(T)\chi_T(x)$, where
$\widehat f(T)=\E[f(X)\chi_T(X)]$ is its Fourier coefficient; see
\cite{ODonnell14}. Reversing coordinates in $S$ multiplies $\chi_T$
by $(-1)^{|T\cap S|}$, giving
\begin{equation}\label{eq:fourier-flip}
 f(x^{\oplus S})=-f(x)
 \quad\Longleftrightarrow\quad
 \widehat f(T)=0\ \text{whenever }|T\cap S|\text{ is even}.
\end{equation}
If several sign-reversing sets exist, one may maximize the lower bound
$\min_{i\in S}Z_i$ over them. For example,
$f(x)=x_i u(x_{[n]\setminus\{i\}})$ has the bound $Z_i$ for every Boolean
function $u$ of the remaining coordinates, regardless of global
antipodality. A nonempty parity function $\chi_T$ has
bound $\max_{i\in T}Z_i$; for homogeneous binary symmetric noise its exact
value is $\sqrt{1-\rho^{2|T|}}$.

\subsection{Randomization}

Let $g:\Q^n\to[-1,1]$, and generate a binary random variable $B\in\Q$
with $\Pr(B=1\mid X=x)=(1+g(x))/2$. Assume that $B$ and $Y$ are
conditionally independent given $X$. When $B$ is balanced, meaning
$\Pr(B=1)=1/2$, the posterior identity from Section~\ref{sec:gram} gives
$Z(P_{Y\mid B=1},P_{Y\mid B=-1})=\E\psi(\E[g(X)\mid Y])$.
Under the sign-reversal condition below, $g$ is a convex combination of
Boolean functions with the same symmetry. Concavity of $\psi$ then
transfers the deterministic bound to $B$.

\begin{proposition}\label{prop:randomized}
If $g(x^{\oplus S})=-g(x)$ for a nonempty $S$, then $B$ is balanced
and $Z(P_{Y\mid B=1},P_{Y\mid B=-1})\ge\min_{i\in S}Z_i$ under
independent coordinate channels. A signed dictator on a coordinate of
minimum affinity in $S$ attains equality.
\end{proposition}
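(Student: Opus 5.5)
The proof has four steps: show $B$ is balanced, write $g$ as a convex combination of Boolean functions with the same symmetry, transfer Theorem~\ref{thm:fixedflip} through concavity of $\psi$, and check the equality case.

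\emph{Balance.} The flip $x\mapsto x^{\oplus S}$ is a fixed-point-free involution preserving the uniform law. Hence $\E g(X)=\E g(X^{\oplus S})=-\E g(X)$, so $\E g=0$ and $\Pr(B=1)=(1+\E g)/2=1/2$. The posterior identity from Section~\ref{sec:gram} then applies to $B$. The output density given $B=\pm1$ is a mixture of rows with weights $(1\pm g(x))/N$, and the same computation as for $\cZ_W$ gives
\[
Z(P_{Y\mid B=1},P_{Y\mid B=-1})=\E_Y\psi\bigl(\E[g(X)\mid Y]\bigr).
\]

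\emph{Decomposition.} Since $S$ is nonempty, the flip has no fixed points, so it partitions $\Q^n$ into $N/2$ pairs $\{x,x^{\oplus S}\}$. Choose one representative $x$ from each pair. On the pair, $g$ takes the values $(g(x),-g(x))$ with $g(x)\in[-1,1]$. This vector is the convex combination
\[
\tfrac{1+g(x)}2\,(1,-1)+\tfrac{1-g(x)}2\,(-1,1).
\]
Let $\sigma_x\in\Q$ be independent random signs with $\Pr(\sigma_x=1)=(1+g(x))/2$. Define the random Boolean function $F$ by $F(x)=\sigma_x$ and $F(x^{\oplus S})=-\sigma_x$ on each representative pair. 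Every realization of $F$ satisfies $F(y^{\oplus S})=-F(y)$, and $\E_\sigma F=g$ pointwise. So $g=\sum_k\mu_kf_k$ is a finite convex combination of Boolean $f_k$ with the sign-reversal symmetry.

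\emph{Concavity.} The map $g\mapsto\E[g(X)\mid Y=y]$ is linear in $g$ for each $y$, and $\psi$ is concave on $[-1,1]$. Therefore
\[
\psi\bigl(\E[g(X)\mid Y]\bigr)\ge\sum_k\mu_k\,\psi\bigl(\E[f_k(X)\mid Y]\bigr).
\]
Taking $\E_Y$ gives $\E\psi(\E[g\mid Y])\ge\sum_k\mu_k\cZ_W(f_k)$. Each $\cZ_W(f_k)\ge\min_{i\in S}Z_i$ by Theorem~\ref{thm:fixedflip}, which yields the bound.

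\emph{Equality.} Let $j\in S$ be a coordinate of minimum affinity in $S$, and take $g(x)=\pm x_j$. This $g$ is deterministic and satisfies $g(x^{\oplus S})=-g(x)$ because $j\in S$. Its value is $Z_j$ by the equality case in the proof of Theorem~\ref{thm:channels}.

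The only nonroutine step is the decomposition, and it is elementary once one uses the fixed-point-free pairing. A measurability or integrability point about exchanging $\E_Y$ with the finite convex sum is the remaining detail, and it is trivial because the sum is finite.
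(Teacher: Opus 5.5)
Your proposal is correct and follows essentially the same route as the paper. The paper also proves balance by pairing, assigns independent random signs with means $g(x)$ to representatives of the pairs $\{x,x^{\oplus S}\}$, and uses concavity of $\psi$ to reduce to Theorem~\ref{thm:fixedflip} for each realization. It likewise takes a signed dictator for sharpness, writing the random realization as $\E_F$ where you use a finite convex combination.
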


\begin{proof}
Choose one representative from each pair $\{x,x^{\oplus S}\}$.
Independently of $(X,Y)$, assign signs $F(x)$ to these representatives
with means $g(x)$, and put $F(x^{\oplus S})=-F(x)$.
Every realization satisfies the same sign-reversal condition, with
$\E_F[F(x)]=g(x)$ for every $x$.
Concavity yields
\[
 \E_Y\psi(\E[g(X)\mid Y])
 \ge\E_F\E_Y\psi(\E[F(X)\mid Y,F])\ge\min_{i\in S}Z_i.
\]
Pairing inputs shows $\E g(X)=0$, so $B$ is balanced. The posterior
identity gives the stated affinity bound. A signed dictator on a coordinate
of minimum affinity in $S$ proves sharpness.
\end{proof}

Concavity is needed here because multiplication by $g$ need not
preserve lengths. The diagonal matrix with entries $g(x)$ need not
square to the identity, so Lemma~\ref{prop:spectral} is applied to
the deterministic functions in the mixture.

\subsection{Examples}

A linear threshold function with threshold zero has the form
$f(x)=\sgn(\sum_i a_ix_i)$ with real weights.
Here $\sgn(t)=1$ for $t>0$ and $-1$ for $t<0$. If the weighted sum
never vanishes, the function is antipodal; majority on an odd number of
bits is an example. When ties occur, opposite assignments on paired
vertices preserve antipodality. Fair independent tie-breaking instead
has conditional mean $g(x)=\sgn(\sum_i a_ix_i)$ with $\sgn(0)=0$,
so Proposition~\ref{prop:randomized} applies. The same proposition
covers any randomized binary function with conditional mean
$g(x)=\tanh(a^{\mathsf T}x)$, for a real weight vector $a$.

Composition preserves antipodality. Consider a Boolean circuit with
inputs $\pm x_i$, no constant inputs, and gates that reverse their
output when all their inputs are reversed. The sign-reversal property
propagates from inputs to outputs, so the computed function is antipodal. Variables and gates may be shared, provided the circuit
has no directed cycles. The independent variables are the original
input bits; the intermediate gate outputs need not be independent.
\section{Information and testing}\label{sec:applications}

We now translate the affinity bound into bounds on mutual information
and hypothesis-testing errors. The Gaussian example shows how stochastic
degradation from a product channel also handles a class of channels
with correlated outputs.

\subsection{Divergence, error probability, and mutual information}

Let $B$ be a binary random variable generated from $X$ as in
Theorem~\ref{thm:fixedflip} or Proposition~\ref{prop:randomized}.
Thus $g(x)=\E[B\mid X=x]$ changes sign when the coordinates in a fixed
nonempty set $S$ are reversed; in the deterministic case, $B=f(X)$ and $g=f$.
Put $P_\pm=P_{Y\mid B=\pm1}$ and $z_* =\min_{i\in S}Z_i$.
The identity $D_{1/2}(P\Vert Q)=-2\log Z(P,Q)$ for R\'enyi
divergence~\cite{vEH14} immediately gives
$D_{1/2}(P_+\Vert P_-)\le-2\log z_*$ in nats, with
$-\log0=+\infty$.
For binary symmetric noise the bound is $-\log(1-\rho^2)$, sharp for a
dictator.

For an integer $k\ge1$, let $P_e^{(k)}$ be the minimum error probability in
equal-prior testing between $P_+^{\otimes k}$ and $P_-^{\otimes k}$,
the laws of $k$ independent observations from one of the two classes.

\begin{corollary}[Bayes error]\label{cor:bayes}
The minimum error probability satisfies
\[
 P_e^{(k)}\ge\frac{1-\sqrt{1-z_*^{\,2k}}}{2}.
\]
For $k=1$ and binary symmetric channels, a signed dictator on a coordinate
of minimum affinity within $S$ attains the bound.
\end{corollary}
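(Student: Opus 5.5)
The plan is to combine three standard facts: the Le Cam relation between Bayes error and total variation, the comparison of total variation with Hellinger affinity, and multiplicativity of affinity under products. With equal priors the minimum error is
\[
 P_e^{(k)}=\frac{1-\TV(P_+^{\otimes k},P_-^{\otimes k})}{2},
\]
where $\TV$ is total variation distance. For any laws $P,Q$ with densities $p,q$, the classical bound is $\TV(P,Q)\le\sqrt{1-Z(P,Q)^2}$. I would prove it directly by Cauchy--Schwarz:
\[
 \left(\int|p-q|\,d\nu\right)^2\le\int(\sqrt p-\sqrt q)^2\,d\nu\int(\sqrt p+\sqrt q)^2\,d\nu=(2-2Z)(2+2Z).
\]
Dividing by four gives $\TV^2\le1-Z^2$.

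Next, affinity is multiplicative under products, so $Z(P_+^{\otimes k},P_-^{\otimes k})=Z(P_+,P_-)^k$. Theorem~\ref{thm:fixedflip} in the deterministic case, or Proposition~\ref{prop:randomized} in the randomized case, gives $Z(P_+,P_-)\ge z_*$. Since $t\mapsto\sqrt{1-t^{2k}}$ is decreasing on $[0,1]$, this yields
\[
 P_e^{(k)}\ge\frac{1-\sqrt{1-Z(P_+,P_-)^{2k}}}{2}\ge\frac{1-\sqrt{1-z_*^{\,2k}}}{2}.
\]
The case $z_*=0$ makes the bound trivially $0$.

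For sharpness, take $k=1$ and $f(x)=\pm x_j$, with $j\in S$ a coordinate of minimum affinity $z_*=\sqrt{1-\rho_j^2}$ for a binary symmetric channel. Then $P_\pm$ differ only in $Y_j$, so $\TV(P_+,P_-)=\rho_j$ and $P_e^{(1)}=(1-\rho_j)/2$. This equals $(1-\sqrt{1-z_*^2})/2$.

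The only genuinely delicate point is the TV--Hellinger inequality. Everything else is bookkeeping, namely measurability of the densities with respect to the dominating measure $\nu$, which the paper has already set up.
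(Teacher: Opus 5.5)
Your proposal is correct and follows essentially the same route as the paper. Both arguments use the Cauchy--Schwarz bound $\TV\le\sqrt{1-Z^2}$, the relation $P_e=(1-\TV)/2$, and multiplicativity of affinity over $k$ copies together with $Z(P_+,P_-)\ge z_*$; sharpness comes from the dictator computation $\TV=\rho_j=\sqrt{1-z_*^{2}}$.
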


\begin{proof}
For densities $p,q$ of laws $P,Q$, their total variation distance is
$\TV(P,Q)=\tfrac12\int|p-q|\,d\nu$. Apply Cauchy--Schwarz to
$|p-q|=|\sqrt p-\sqrt q|(\sqrt p+\sqrt q)$ to obtain
$\TV(P,Q)\le\sqrt{1-Z(P,Q)^2}$. Use
$P_e=(1-\TV)/2$ and the product identity
$Z(P_+^{\otimes k},P_-^{\otimes k})=Z(P_+,P_-)^k\ge z_*^k$.
For a coordinate $j\in S$ attaining $z_*$, the choice $B=\pm X_j$
under binary symmetric noise gives
$\TV(P_+,P_-)=\sqrt{1-Z_j^2}$, proving equality.
\end{proof}

An upper bound on the Chernoff information requires more than this
affinity estimate. Define
$C(P,Q)=-\log\inf_{0<s<1}\int p^sq^{1-s}\,d\nu$, where $p,q$
are densities relative to a common dominating measure. The open interval
avoids endpoint conventions when supports differ.
In general $C(P,Q)\ge-\log Z(P,Q)$. The reverse estimate needed here
comes from an output symmetry.

The useful condition is that a measurable map $\tau$, with
$\tau(\tau(y))=y$, exchanges the two output laws. We state the resulting
fact separately because the Gaussian application uses it without a
product channel.

\begin{lemma}[Symmetric testing laws]\label{lem:symmetric-testing}
If such a map exchanges $P$ and $Q$, then $C(P,Q)=-\log Z(P,Q)$.
\end{lemma}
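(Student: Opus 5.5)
The plan is to show that the function $\phi(s)=\int p^sq^{1-s}\,d\nu$ on $(0,1)$ is symmetric about $s=1/2$ and log-convex. Together these force its infimum to be attained at $s=1/2$, where $\phi(1/2)=Z(P,Q)$. Taking $-\log$ then gives $C(P,Q)=-\log Z(P,Q)$, and the general inequality $C\ge-\log Z$ is just the special choice $s=1/2$.

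\emph{Symmetry.} Take $\nu=P+Q$. The map $\tau$ is an involution with $\tau_*P=Q$ and $\tau_*Q=P$, so $\tau_*\nu=\nu$. The density of $Q=\tau_*P$ relative to $\nu$ is then $p\circ\tau$, since for measurable $E$,
\[
 Q(E)=P(\tau^{-1}E)=\int \one_E(\tau y)\,p(y)\,d\nu(y)=\int \one_E(y)\,p(\tau y)\,d\nu(y),
\]
where the last step uses $\tau^{-1}=\tau$ and the $\tau$-invariance of $\nu$. So $q=p\circ\tau$ $\nu$-almost everywhere, and likewise $p=q\circ\tau$. Changing variables by $\tau$ gives
\[
 \phi(s)=\int p(\tau y)^s q(\tau y)^{1-s}\,d\nu=\int q^sp^{1-s}\,d\nu=\phi(1-s).
\]

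\emph{Log-convexity.} For $s,t\in(0,1)$ and $\lambda\in[0,1]$, Hölder's inequality with exponents $1/\lambda$ and $1/(1-\lambda)$, applied to $p^{\lambda s}q^{\lambda(1-s)}$ and $p^{(1-\lambda)t}q^{(1-\lambda)(1-t)}$, gives $\phi(\lambda s+(1-\lambda)t)\le\phi(s)^\lambda\phi(t)^{1-\lambda}$. In particular,
\[
 \phi(1/2)\le\sqrt{\phi(s)\phi(1-s)}=\phi(s)
\]
by symmetry. Hence $\inf_{0<s<1}\phi(s)=\phi(1/2)=Z(P,Q)$. Mere convexity of $\phi$ would also suffice here, via $\phi(1/2)\le(\phi(s)+\phi(1-s))/2$.

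\emph{Main obstacle.} The only delicate step is the measure-theoretic claim $q=p\circ\tau$ almost everywhere. It relies on choosing the dominating measure $\tau$-invariant, and $P+Q$ is. The Hölder step also needs finiteness of $\phi$, which follows from $\phi(s)\le\int(sp+(1-s)q)\,d\nu=1$ by weighted AM--GM.

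The degenerate case $Z=0$ needs no separate treatment. If $Z=0$ the supports are disjoint, so $\phi\equiv0$ on $(0,1)$, and both sides equal $+\infty$ with the convention $-\log0=+\infty$.
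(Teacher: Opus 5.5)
Your proposal is correct and follows essentially the paper's proof: $\tau$-invariance of $\nu=P+Q$ gives $q=p\circ\tau$ almost everywhere and hence $F(s)=F(1-s)$, and your H\"older step with $\lambda=1/2$, $t=1-s$ is exactly the paper's Cauchy--Schwarz bound $F(1/2)\le\sqrt{F(s)F(1-s)}=F(s)$. You also spell out the change-of-variables check of $q=p\circ\tau$ and the finiteness and $Z=0$ cases, which the paper leaves implicit.
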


\begin{proof}
The measure $\nu=P+Q$ is preserved by $\tau$. The densities satisfy
$q(y)=p(\tau(y))$ almost everywhere with respect to $\nu$, so $F(s)=\int p^sq^{1-s}\,d\nu$ obeys
$F(s)=F(1-s)$. Cauchy--Schwarz gives
$F(1/2)\le\sqrt{F(s)F(1-s)}=F(s)$. Thus the minimum is the affinity.
\end{proof}

Call channel $i$ \emph{output-symmetric} if a measurable map $\tau_i$
with $\tau_i^2=\mathrm{id}$ exchanges its two rows.

\begin{corollary}[Chernoff information]\label{cor:chernoff}
If the channels on $S$ are output-symmetric, then
$C(P_+,P_-)=-\log Z(P_+,P_-)\le-\log z_*$.
A signed dictator on a coordinate of minimum affinity in $S$ attains the bound.
\end{corollary}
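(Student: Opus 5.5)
The plan is to apply Lemma~\ref{lem:symmetric-testing} to $P_+$ and $P_-$. That requires a measurable involution of the full output space exchanging them; the inequality then follows from the affinity bound already proved. For each $i\in S$ let $\tau_i$ be the given involution exchanging $W_i(\cdot\mid1)$ and $W_i(\cdot\mid-1)$, and for $i\notin S$ let $\tau_i$ be the identity. Put $\tau=\tau_1\times\cdots\times\tau_n$ on the product output space. It is measurable and satisfies $\tau^2=\mathrm{id}$.

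The key step is to check that $\tau$ pushes the row $W(\cdot\mid x)$ to $W(\cdot\mid x^{\oplus S})$. This holds coordinatewise: on coordinates in $S$ the map exchanges the two rows, and on the others it fixes them. Now $P_+=\sum_x\Pr(X=x\mid B=1)\,W(\cdot\mid x)$. Since $X$ is uniform, $\Pr(X=x\mid B=1)=(1+g(x))/(2\cdot 2^{n}\Pr(B=1))$ and $\Pr(B=1)=1/2$ by balancedness. The symmetry $g(x^{\oplus S})=-g(x)$ gives
\[
\Pr(X=x^{\oplus S}\mid B=1)=\Pr(X=x\mid B=-1).
\]
Reindexing the mixture by $x\mapsto x^{\oplus S}$ therefore shows that $\tau$ pushes $P_+$ to $P_-$, and since $\tau$ is an involution it also pushes $P_-$ to $P_+$. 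The deterministic case $B=f(X)$ is the special case $g=f$. Conditional independence of $B$ and $Y$ given $X$ is what makes $P_\pm$ these mixtures of rows.

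Lemma~\ref{lem:symmetric-testing} then gives $C(P_+,P_-)=-\log Z(P_+,P_-)$. The inequality $Z(P_+,P_-)\ge z_*$ comes from Theorem~\ref{thm:fixedflip} in the deterministic case and from Proposition~\ref{prop:randomized} in the randomized case. Since $-\log$ is decreasing, with the convention $-\log0=+\infty$, we obtain $-\log Z(P_+,P_-)\le-\log z_*$.

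For sharpness, take $B=\pm X_j$ with $j\in S$ attaining $z_*$. It satisfies the symmetry, because flipping $S$ reverses $x_j$. Its class-conditional laws differ only in coordinate $j$, and affinity is multiplicative over the product, so $Z(P_+,P_-)=Z_j=z_*$. The identity already established then gives $C(P_+,P_-)=-\log z_*$.

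The only real obstacle is the measure-theoretic bookkeeping in Lemma~\ref{lem:symmetric-testing}: one needs $\tau$ to preserve the dominating measure and the densities to transform as $q=p\circ\tau$. Once the pushforward identity $\tau_\#P_+=P_-$ is verified, choosing $\nu=P_++P_-$ handles this automatically.
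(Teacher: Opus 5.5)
Your proposal is correct and follows the paper's proof. You use the same ingredients: the product involution acting by $\tau_i$ on coordinates in $S$ exchanges $P_+$ and $P_-$ because $g(x^{\oplus S})=-g(x)$, Lemma~\ref{lem:symmetric-testing} gives $C=-\log Z$, the affinity bound $Z(P_+,P_-)\ge z_*$ gives the inequality, and a signed dictator on a minimum-affinity coordinate in $S$ gives sharpness. You also write out the mixture reindexing and the pushforward bookkeeping that the paper's three-line proof leaves implicit.
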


\begin{proof}
Apply $\tau_i$ on output coordinates in $S$ and the identity elsewhere.
The resulting map exchanges $P_+$ and $P_-$ because
$g(x^{\oplus S})=-g(x)$. Lemma~\ref{lem:symmetric-testing} and
$Z(P_+,P_-)\ge z_*$ prove the bound. A dictator reduces both quantities
to those of its chosen coordinate channel.
\end{proof}

Under independent repetitions from the two class-conditional laws,
Chernoff's theorem~\cite{Chernoff52} identifies $C(P_+,P_-)$ as the
optimal equal-prior error exponent. Here each repetition uses a fresh
input drawn from the same class (or from $P_{X\mid B=b}$ for a randomized
binary variable $B$). Repeatedly observing one fixed hidden input is a
different model.

The entropy comparison of~\cite[Proposition~1]{ABCKN17} also applies
to balanced binary random variables generated from $X$. Write
$h_2(p)=-p\log_2p-(1-p)\log_2(1-p)$ for binary entropy, with
$0\log_2 0=0$. As throughout the paper, entropy and mutual information
are measured in bits.

\begin{corollary}[Mutual information]\label{cor:shannon}
For every binary variable $B$ above, under binary symmetric noise with
common correlation $\rho$,
\[
 I(B;Y)\le1-h_2\!\left(\frac{1-\rho}{2}\right),\qquad 0\le\rho\le1.
\]
\end{corollary}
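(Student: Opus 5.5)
The plan is to combine the affinity bound with the entropy comparison of \cite[Proposition~1]{ABCKN17}, applied to the balanced binary variable $B$.

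First, $B$ is balanced by Theorem~\ref{thm:fixedflip} or Proposition~\ref{prop:randomized}. Its posterior mean is $\mu(Y)=\E[B\mid Y]=\E[g(X)\mid Y]$, using the conditional independence of $B$ and $Y$ given $X$. Writing $\phi(t)=h_2((1-t)/2)$, we have
\[
 I(B;Y)=H(B)-H(B\mid Y)=1-\E\,\phi(\mu(Y)).
\]
For binary symmetric noise with correlation $\rho$, every coordinate affinity equals $\sqrt{1-\rho^2}$. So the results of Section~\ref{sec:extensions} give
\[
 \E\,\psi(\mu(Y))\ge\sqrt{1-\rho^2}=\psi(\rho).
\]

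The core step is a pointwise comparison between $\phi$ and $\psi$ that converts this affinity bound into an entropy bound. Following \cite{ABCKN17}, I would use the reparametrization $\phi(t)=\Phi(\psi(t))$, which is valid because both functions are even in $t$. Here $\Phi(s)=h_2\bigl((1-\sqrt{1-s^2})/2\bigr)$ is defined on $[0,1]$. The key claim is that $\Phi$ is concave and nondecreasing on $[0,1]$. Granting this, Jensen gives
\[
 \E\,\phi(\mu)=\E\,\Phi(\psi(\mu))\le\Phi(\E\psi(\mu)),
\]
but this inequality points the wrong way. What we actually need is a lower bound on $\E\phi(\mu)$ in terms of $\E\psi(\mu)$. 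So the statement required is \emph{convexity} of $\Phi$ together with monotonicity, which yields
\[
 \E\phi(\mu)\ge\Phi(\E\psi(\mu))\ge\Phi(\psi(\rho))=\phi(\rho)=h_2\!\left(\frac{1-\rho}{2}\right).
\]
Substituting into the formula for $I(B;Y)$ gives the corollary.

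The main obstacle is verifying that $s\mapsto h_2\bigl((1-\sqrt{1-s^2})/2\bigr)$ is convex and nondecreasing on $[0,1]$. This is precisely the content of \cite[Proposition~1]{ABCKN17}, and I would cite it directly. Failing that, I would check it by differentiation in the variable $t=\sqrt{1-s^2}$. With natural logarithms, $\phi'(t)=-\tfrac12\log\frac{1+t}{1-t}=-\operatorname{artanh}t$. Since $ds/dt=-t/s$, this gives
\[
 \frac{d\Phi}{ds}=\frac{s\,\operatorname{artanh}t}{t}\ge0.
\]
Convexity then amounts to showing that $s\,\operatorname{artanh}(t)/t$ increases in $s$, equivalently decreases in $t$ along $s=\sqrt{1-t^2}$. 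This follows from an elementary series comparison: $\operatorname{artanh}(t)/t=\sum_k t^{2k}/(2k+1)$, and multiplying by $\sqrt{1-t^2}$ gives a decreasing function. That can be checked via the derivative sign, $(1-t^2)\cdot\frac{d}{dt}[\operatorname{artanh}t/t]\le t\operatorname{artanh}(t)/t$.

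For sharpness at the endpoints, $\rho=0$ gives $\mu=0$ and $I=0$, while $\rho=1$ gives the trivial bound $1$. Neither needs separate treatment. The randomized case needs nothing extra, because only the affinity inequality for $\mu=\E[g(X)\mid Y]$ was used.
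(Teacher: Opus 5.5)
Your argument is correct and is essentially the paper's proof: write $H(B\mid Y)=\E F(\psi(m))$ with $m=\E[B\mid Y]=\E[g(X)\mid Y]$ and $F(z)=h_2((1-\sqrt{1-z^2})/2)$ convex and nondecreasing, then apply Jensen, the affinity bound $\E\psi(m)\ge\sqrt{1-\rho^2}$, and $H(B)=1$. Delete the initial ``concave'' claim that you later retract, and note that your derivative condition reduces to $\ln\frac{1+t}{1-t}\ge2t$, the same inequality used in Appendix~\ref{app:entropy}; also, the convexity fact is \cite[Lemma~1]{ABCKN17}, while Proposition~1 there is the entropy comparison.
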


\begin{proof}
The scalar function $F(z)=h_2((1-\sqrt{1-z^2})/2)$ is increasing
and convex on $[0,1]$~\cite[Lemma~1]{ABCKN17};
Appendix~\ref{app:entropy} records the calculus verification.
For the posterior mean $m=\E[B\mid Y]$, symmetry of binary entropy gives
$F(\psi(m))=h_2((1-m)/2)$. Thus the conditional entropy can be bounded
by applying Jensen's inequality to $F$ and then using the affinity bound:
\[
 H(B\mid Y)=\E F(\psi(m))\ge F(\E\psi(m))
 \ge F(\sqrt{1-\rho^2})=h_2((1-\rho)/2).
\]
Finally, $H(B)=1$.
\end{proof}

\subsection{Independent and correlated Gaussian observations}

Consider first $Y_i=a_iX_i+\sigma_iN_i$, where the $N_i$ are independent
standard normals, independent of $X$, and $\sigma_i>0$.
The affinity of $\mathcal N(a_i,\sigma_i^2)$ and
$\mathcal N(-a_i,\sigma_i^2)$ is
$Z_i=e^{-a_i^2/(2\sigma_i^2)}$. Reflection $y_i\mapsto-y_i$
exchanges the two rows, so these channels are output-symmetric.
Whenever $g(x)=\E[B\mid X=x]$ is antipodal, as in Section~\ref{sec:extensions},
Corollary~\ref{cor:chernoff} gives
\begin{equation}\label{eq:diagonal-gaussian}
 Z(P_+,P_-)\ge e^{-\frac12\max_i a_i^2/\sigma_i^2},
 \qquad C(P_+,P_-)\le\frac12\max_i\frac{a_i^2}{\sigma_i^2}.
\end{equation}
Both bounds are sharp for each fixed diagonal channel, attained by a
dictator on a coordinate of largest signal-to-noise ratio. For $B=f(X)$,
the conditional laws are mixtures of $2^{n-1}$ Gaussians, yet their
optimal testing exponent cannot exceed that of this single coordinate.
If $g(x^{\oplus S})=-g(x)$, the maximum
in~\eqref{eq:diagonal-gaussian} can be restricted to $i\in S$.

For correlated Gaussian observations, the Gram matrix need not have a
Kronecker product structure. We instead realize the observation as a
stochastically degraded output of a product Gaussian channel: its output
is transformed linearly and independent noise is added.
The affinity bound then follows from data processing.
If $B\to U\to V$ is a Markov chain, conditional Jensen gives
\begin{equation}\label{eq:processing}
 \E\psi(\E[B\mid V])\ge\E\psi(\E[B\mid U]).
\end{equation}
This is Hellinger data processing, expressed through posterior means;
see~\cite[Theorem~9 and Example~2]{vEH14}. It is consistent with the ordering of statistical
experiments by randomization~\cite{Blackwell53}.

Now let $Y=AX+\eta$, where $A\in\R^{d\times n}$ and
$\eta\sim\mathcal N(0,\Sigma)$ is independent of $X$, with positive
definite covariance $\Sigma$. Let $B$ be a binary random variable whose conditional
mean $g(x)=\E[B\mid X=x]$ is antipodal, and assume that $B$ and
$\eta$ are conditionally independent given $X$. Retain
$P_\pm=P_{Y\mid B=\pm1}$. Put
$L=\norm{\Sigma^{-1/2}A}_{\mathrm{op}}^2$, where
$\norm M_{\mathrm{op}}=\sup_{\norm v_2=1}\norm{Mv}_2$ is the largest
singular value of $M$. Multiplication by $\Sigma^{-1/2}$ makes the noise
covariance the identity; this is the usual whitening transformation.

\begin{proposition}[Linear Gaussian observations]\label{prop:linear-gaussian}
In this model, $Z(P_+,P_-)\ge e^{-L/2}$ and $C(P_+,P_-)\le L/2$.
Both constants are sharp among models with the same value of $L$.
\end{proposition}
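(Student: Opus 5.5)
The plan is to realize $Y$ as a stochastically degraded output of a diagonal Gaussian product channel, apply \eqref{eq:diagonal-gaussian} to that channel, and transfer the bound to $Y$ through the data-processing inequality \eqref{eq:processing}.

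\emph{Whitening.} Put $\tilde Y=\Sigma^{-1/2}Y=MX+\xi$ with $M=\Sigma^{-1/2}A$ and $\xi$ standard normal. This is an invertible map of $Y$, so both $Z(P_+,P_-)$ and $C(P_+,P_-)$ are unchanged, and $L=\norm M_{\mathrm{op}}^2$.

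\emph{Degradation.} Set $\sigma=L^{-1/2}$ (if $L=0$, $Y$ carries no information about $X$ and the bounds are trivial). Let $U=X+\sigma N$ with $N\sim\mathcal N(0,I_n)$. Then $MU+\zeta=MX+\sigma MN+\zeta$, where $\zeta\sim\mathcal N(0,I_d-\sigma^2MM^{\mathsf T})$ is independent of everything else. The covariance $I_d-\sigma^2MM^{\mathsf T}$ is positive semidefinite because $\sigma^2\norm M_{\mathrm{op}}^2=1$. Hence $MU+\zeta$ has the same joint law with $(X,B)$ as $\tilde Y$. Since $B$ and the noise are conditionally independent given $X$, we obtain a Markov chain $B\to U\to\tilde Y$.

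\emph{Affinity bound.} By \eqref{eq:processing} and the posterior identity for balanced $B$ (balanced by antipodality of $g$), $Z(P_+,P_-)\ge Z(P_{U\mid B=1},P_{U\mid B=-1})$. The channel from $X$ to $U$ is the diagonal Gaussian product channel with $a_i=1$ and $\sigma_i=\sigma$. By \eqref{eq:diagonal-gaussian}, its affinity is at least $e^{-1/(2\sigma^2)}=e^{-L/2}$.

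\emph{Chernoff bound.} Use output symmetry: $y\mapsto-y$ exchanges $P_+$ and $P_-$, because $\eta$ is symmetric and $g$ is antipodal. The map $X\mapsto -X$ carries the law of $X$ given $B=1$ to that given $B=-1$, by antipodality and the uniformity of $X$. Lemma~\ref{lem:symmetric-testing} then gives $C=-\log Z\le L/2$.

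\emph{Sharpness.} For a given $L$, take $d=n$, $\Sigma=I$, and $A=\sqrt L\,e_1e_1^{\mathsf T}$, so that $Y_1=\sqrt L X_1+\eta_1$. Take the dictator $B=X_1$. The two class laws differ only in coordinate $1$, as $\mathcal N(\pm\sqrt L,1)$, with affinity $e^{-L/2}$ and Chernoff information $L/2$.

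The main obstacle is making the degradation step rigorous. Two points need care: $\zeta$ must be built independent of $(X,B,N)$, and the posterior identity must hold on both sides of \eqref{eq:processing}. Both reduce to the conditional-independence hypothesis, so the remaining work is routine bookkeeping.
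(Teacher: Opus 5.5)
Your proof is correct and matches the paper's argument step for step: whitening, degradation from the product channel $U=X+L^{-1/2}N$ (the paper's $U=\sqrt L\,X+\xi$ rescaled by $L^{-1/2}$), data processing combined with \eqref{eq:diagonal-gaussian}, the reflection $y\mapsto-y$ with Lemma~\ref{lem:symmetric-testing}, and the one-coordinate sharpness example. Your explicit check that $X\mapsto-X$ carries $P_{X\mid B=1}$ to $P_{X\mid B=-1}$ spells out what the paper summarizes as $p_-(y)=p_+(-y)$.
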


\begin{proof}
Whitening is invertible, so it preserves affinity and Chernoff
information. Write $M=\Sigma^{-1/2}A$. If $L=0$, the output is
independent of $B$. For $L>0$, consider the product Gaussian channel
$U=\sqrt L\,X+\xi$, where $\xi\sim\mathcal N(0,I_n)$ is independent
of $(X,B)$. Independently of $(X,B,\xi)$, introduce
$\zeta\sim\mathcal N(0,I_d-MM^{\mathsf T}/L)$; this covariance is
positive semidefinite because $\norm M_{\mathrm{op}}^2=L$.
A singular covariance is allowed. Then
\[
 V=MU/\sqrt L+\zeta
   =MX+M\xi/\sqrt L+\zeta,
 \qquad V\mid X\sim\mathcal N(MX,I_d).
\]
Thus $(B,V)$ has the same law as $(B,\Sigma^{-1/2}Y)$, while
$B\to U\to V$ is a Markov chain. Apply
\eqref{eq:processing} and~\eqref{eq:diagonal-gaussian} to obtain
$Z(P_+,P_-)\ge e^{-L/2}$. Antipodality and centered Gaussian noise
give $p_-(y)=p_+(-y)$ for the class-conditional densities. Lemma~\ref{lem:symmetric-testing} yields
$C(P_+,P_-)=-\log Z(P_+,P_-)$. For sharpness, take $\Sigma=I_d$,
$B=X_1$, and $A$ with sole possibly nonzero entry $A_{11}=\sqrt L$.
Only the first observation depends on $B$, and its affinity
is $e^{-L/2}$.
\end{proof}

For a given nondiagonal $A$, the bound need not be attained by a
dictator. Sharpness concerns the class of models determined only by
$L$. The input remains uniform on the whole cube, as required by the
product channel used in the proof.

\section{Scope of the argument}\label{sec:limitations}

\subsection{Balancedness does not imply sign-reversal symmetry}

Functions whose sign is reversed by flipping a fixed subset of coordinates
form a proper subclass of balanced functions. This remains true among
monotone functions, for which changing an input bit from $-1$ to $1$
never decreases the output. With sign-valued
$\mathrm{OR}(u,v)=(1+u+v-uv)/2$ and
$\mathrm{AND}(u,v)=(-1+u+v+uv)/2$, consider
\[
 f(x_1,x_2,x_3,x_4)=
 \begin{cases}
  \mathrm{OR}(x_2,x_3),&x_1=1,\\
  \mathrm{AND}(x_2,x_4),&x_1=-1.
 \end{cases}
\]
The two slice means are $1/2$ and $-1/2$, so $f$ is balanced.
Its singleton Fourier coefficients are $1/2,1/2,1/4,1/4$, while
$\widehat f(\{2,3\})=-1/4$. By~\eqref{eq:fourier-flip}, a sign-reversing
set would have to contain all four coordinates, contradicting this
nonzero even-degree coefficient. The function is monotone because each
branch is increasing and the OR branch always dominates the AND branch.

This example separates the two roles of symmetry. Balance says that the
classes have equal size. The proof uses a symmetry of the row-affinity
Gram matrix that exchanges the classes. At the matrix level, the former gives
$\Tr D_f=0$, whereas the latter gives~\eqref{eq:anticommutation}. Without an exchange of two invariant
subspaces, the same spectral comparison gives no positive constant:
$G_h$ and $D_fG_hD_f$ have identical full spectra. Extending the method
therefore requires additional control of the relative position of their
eigenspaces, rather than a comparison of their full eigenvalue lists.

Adding an extra input bit can turn any Boolean function into an
antipodal one, but does not immediately prove the general conjecture.
The added bit introduces an extra noise parameter; removing that noise
also removes the useful lower bound. Appendix~\ref{app:lift} gives the
exact comparison.

\subsection{Equality and scope}

Signed dictators prove sharpness without requiring a classification of all
extremizers. To classify all extremizers, one must track equality in the Gram
triangle inequality and the spectral comparison, and for unequal
channels in the coordinatewise monotonicity step as well. At $\rho=0$ and $\rho=1$, every balanced function has
equality in~\eqref{eq:main}; an interior classification must treat these
endpoints separately. Quantitative stability would additionally have
to control the losses in these successive steps.

The product-channel theorem assumes independent uniform input bits.
Biased inputs introduce unequal weights into the Gram construction and
generally destroy its commutation with the global flip. Arbitrary
correlated channels need not have a Gram matrix with the Kronecker
product structure used here.
Proposition~\ref{prop:linear-gaussian} covers a particular correlated
observation model through stochastic degradation from a product
channel; it does not remove those assumptions from
Theorem~\ref{thm:channels}. Beyond these cases, the general biased
conjecture~\eqref{eq:conjecture} and the balanced problem without a
sign-reversing symmetry remain outside the present argument.

\appendix

\section{A weighted edge-boundary limit}\label{app:boundary}

Two cube vertices are neighbors if they differ in exactly one
coordinate; edges joining opposite values of $f$ form its edge boundary.
When the coordinate affinities are small, Gram entries between neighbors
are first order, while entries between vertices differing in at least two
coordinates are higher order. Thus, to first order, the off-diagonal block
between the two level sets records only edges crossing this boundary.
The nuclear-norm lower bound consequently yields a weighted edge-boundary
inequality. Let
$\boldsymbol w=(w_1,\ldots,w_n)$ have nonnegative entries and let $A_w$ be the weighted cube adjacency matrix:
$A_w(x,x^{\oplus\{i\}})=w_i$, with other entries zero. For
$A=f^{-1}(1)$ and $B=f^{-1}(-1)$, its off-diagonal block $A_w[A,B]$ records the
edges cut by $f$. Write $\one_{\{\cdot\}}$ for the indicator of an event.

\begin{proposition}\label{prop:boundary}
If $f(x^{\oplus S})=-f(x)$ for a nonempty $S$, then
\[
 \frac2N\norm{A_w[A,B]}_1\ge\min_{i\in S}w_i,
 \qquad
 \E\sqrt{\sum_i w_i^2\one_{\{f(X)\ne f(X^{\oplus\{i\}})\}}}
 \ge\min_{i\in S}w_i.
\]
Both constants are attained by a signed dictator on a minimum-weight coordinate
in $S$.
\end{proposition}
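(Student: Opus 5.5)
The plan is to obtain both inequalities as the low-noise limit of the matrix bound $q_f(\boldsymbol z)\ge\min_{i\in S}z_i$ of Theorem~\ref{thm:fixedflip}, with the choice $\boldsymbol z=\varepsilon\boldsymbol w$ and $\varepsilon\to0$.

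\emph{First inequality.} Assume first that all $w_i>0$; the case with zero weights is handled by continuity at the end. For small $\varepsilon>0$ the vector $\varepsilon\boldsymbol w$ lies in $[0,1]^n$. Expanding the Kronecker product gives
\[
G_{\varepsilon\boldsymbol w}=I+\varepsilon A_w+O(\varepsilon^2).
\]
The $O(\varepsilon^2)$ term collects entries between vertices at Hamming distance at least two. The identity part cancels in $G-D_fGD_f$, so
\[
\Delta_f(\varepsilon\boldsymbol w)=\varepsilon\,(A_w-D_fA_wD_f)+O(\varepsilon^2).
\]
As in the proof of Proposition~\ref{prop:gram}, the matrix $A_w-D_fA_wD_f$ has the block form $\begin{pmatrix}0&2C\\2C^{\mathsf T}&0\end{pmatrix}$ with $C=A_w[A,B]$, so its nuclear norm is $4\norm C_1$. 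Hence
\[
q_f(\varepsilon\boldsymbol w)=\frac{2\varepsilon}{N}\norm{A_w[A,B]}_1+O(\varepsilon^2),
\]
using that the nuclear norm is Lipschitz. Theorem~\ref{thm:fixedflip} gives $q_f(\varepsilon\boldsymbol w)\ge\varepsilon\min_{i\in S}w_i$. Dividing by $\varepsilon$ and letting $\varepsilon\to0$ proves the first inequality.

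\emph{Second inequality.} Apply the triangle inequality to the row decomposition of $C$, mirroring the Gram reduction:
\[
C=\sum_{x\in A}e_x c_x^{\mathsf T},
\]
where $c_x$ is the row of $C$ indexed by $x$. Each rank-one term has nuclear norm $\norm{c_x}_2=\sqrt{\sum_i w_i^2\one_{\{f(x)\ne f(x^{\oplus\{i\}})\}}}$. The same decomposition by columns gives the analogous bound with the sum over $B$. Averaging the two bounds yields
\[
2\norm C_1\le\sum_{x}\sqrt{\sum_i w_i^2\one_{\{f(x)\ne f(x^{\oplus\{i\}})\}}}.
\]
Dividing by $N$ gives $\frac2N\norm C_1\le\E\sqrt{\cdots}$, and combining this with the first inequality proves the second.

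\emph{Sharpness.} For $f=\pm x_j$ with $j\in S$ a minimum-weight coordinate, $C$ is $w_j$ times a permutation matrix between $A$ and $B$. Its nuclear norm is therefore $w_j N/2$, so the first inequality is an equality. For the same $f$, every vertex has exactly one cut edge, of weight $w_j$, so $\E\sqrt{\cdots}=w_j$ and the second inequality is also an equality.

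The main obstacle is being careful with the limit. One must control the $O(\varepsilon^2)$ remainder in nuclear norm, which is immediate because the matrix dimension is fixed. One must also handle $w_i=0$: the argument applies as written, or one can replace $w_i$ by $w_i+\delta$ and let $\delta\to0$, using continuity of both sides of each inequality in $\boldsymbol w$.
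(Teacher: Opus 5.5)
Your proposal is correct and is essentially the paper's proof: the same expansion $G_{t\boldsymbol w}=I+tA_w+O(t^2)$ fed into the matrix bound of Theorem~\ref{thm:fixedflip}, the same rank-one row decomposition of $A_w[A,B]$, and the same permutation-matrix check for dictators. The only variation is in turning the row sum over $A$ into an expectation over the whole cube. The paper uses that $x\mapsto x^{\oplus S}$ maps $A$ onto $B$ and preserves the cut weights, whereas you average the row and column decompositions, which gives $\frac2N\norm{A_w[A,B]}_1\le\E\sqrt{\sum_i w_i^2\one_{\{f(X)\ne f(X^{\oplus\{i\}})\}}}$ for every Boolean $f$, not only sign-reversing ones.
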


\begin{proof}
For fixed $n$ and $\boldsymbol w$, as $t\downarrow0$ one has
$G_{t\boldsymbol w}=I+tA_w+O(t^2)$ in nuclear norm. Taking $t$ small
enough that $t w_i\le1$, Theorem~\ref{thm:fixedflip} and continuity give
\[
 \lim_{t\downarrow0}\frac{q_f(t\boldsymbol w)}t
 =\frac1{2N}\norm{A_w-D_fA_wD_f}_1
 =\frac2N\norm{A_w[A,B]}_1\ge\min_{i\in S}w_i.
\]
Write $r(x)=(\sum_iw_i^2\one_{\{f(x)\ne f(x^{\oplus\{i\}})\}})^{1/2}$.
Decomposing the off-diagonal block into its rows gives
$\norm{A_w[A,B]}_1\le\sum_{x\in A}r(x)$. The coordinate flip is a bijection
between $A$ and $B$ preserving $r$, so
$(2/N)\sum_{x\in A}r(x)=\E r(X)$. This proves the second bound.
For a dictator, the off-diagonal block is its coordinate weight times a
permutation matrix.
\end{proof}

When $w_i=1$, the scalar consequence is $\E\sqrt{s_f(X)}\ge1$,
already known for all balanced Boolean functions by~\cite[(1.4)]{DIRX26}.
The additional statement here is the nuclear-norm bound and its weighted
form under a sign-reversing coordinate flip. The limit is a small-noise limit because
$h=\sqrt{1-\rho^2}$.

\section{An extension by one input bit}\label{app:lift}

An arbitrary Boolean function admits the odd--even decomposition
$f_{\mathrm o}(x)=(f(x)-f(-x))/2$ and
$f_{\mathrm e}(x)=(f(x)+f(-x))/2$. At each point one component is zero
and the other is a sign. Consequently
$g(x_0,x)=f_{\mathrm o}(x)+x_0 f_{\mathrm e}(x)$ is Boolean and antipodal.
This construction extends $f$ to an antipodal function by adding one input bit.
For fixed $\rho\in[0,1]$, write
$M_\rho(s)=\E\psi(T_\rho f_{\mathrm o}+sT_\rho f_{\mathrm e})$,
where $s\in[0,1]$ controls the correlation of the added bit and all
noise-operator outputs are evaluated at the uniform noisy input.
In particular, $M_\rho(1)=\cZ_\rho(f)$ because
$f_{\mathrm o}+f_{\mathrm e}=f$.
The channel theorem gives the following bound for the extension.

\begin{proposition}\label{prop:lift}
For $0\le s\le1$,
\begin{equation}\label{eq:lift}
 M_\rho(s)\ge\min\{\sqrt{1-\rho^2},\sqrt{1-s^2}\}.
\end{equation}
For fixed $\rho$, $M_\rho$ is nonincreasing on $[0,1]$.
\end{proposition}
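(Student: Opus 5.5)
We read $M_\rho(s)$ as the channel functional $\cZ_W(g)$ of the antipodal extension $g(x_0,x)=f_{\mathrm o}(x)+x_0f_{\mathrm e}(x)$ on $\Q^{n+1}$, and then apply Theorem~\ref{thm:channels}. Take the product channel in which coordinates $1,\dots,n$ are binary symmetric with correlation $\rho$, and coordinate $0$ is binary symmetric with correlation $s$. Given the noisy output $(Y_0,Y)$, the posterior mean factorizes over the independent coordinates, so
\[
 \E[g(X_0,X)\mid Y_0,Y]=T_\rho f_{\mathrm o}(Y)+sY_0\,T_\rho f_{\mathrm e}(Y).
\]
Since $Y_0=\pm1$, we have $\psi(a+sY_0b)$ equal to $\psi(a\pm sb)$. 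Antipodality of $g$ gives $\psi$-symmetry: $T_\rho f_{\mathrm o}$ is odd and $T_\rho f_{\mathrm e}$ is even in $Y$. Substituting $Y\mapsto -Y$ and using $\psi(-t)=\psi(t)$ turns the $Y_0=-1$ term into the $Y_0=1$ term. Hence $\cZ_W(g)=\E\psi(T_\rho f_{\mathrm o}+sT_\rho f_{\mathrm e})=M_\rho(s)$.

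The coordinate affinities are $\sqrt{1-\rho^2}$ for the $n$ original coordinates and $\sqrt{1-s^2}$ for coordinate $0$. Theorem~\ref{thm:channels} then gives \eqref{eq:lift} immediately. The only care needed is the bookkeeping in the first step: $Y_0$ must be uniform and independent of $Y$, and the reduction needs the parity claims $T_\rho f_{\mathrm o}(-y)=-T_\rho f_{\mathrm o}(y)$ and $T_\rho f_{\mathrm e}(-y)=T_\rho f_{\mathrm e}(y)$, which hold because $T_\rho$ commutes with the global flip $J$.

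For monotonicity in $s$, the natural tool is data processing \eqref{eq:processing}. For $0\le s'\le s\le1$, the BSC with correlation $s'$ is a degradation of the one with correlation $s$: compose it with an independent BSC of correlation $s'/s$ (with the obvious convention at $s=0$). Hence $g(X_0,X)\to(Y_0^{(s)},Y)\to(Y_0^{(s')},Y)$ is a Markov chain. Conditional Jensen then gives $\cZ_{W_{s'}}(g)\ge\cZ_{W_s}(g)$, that is, $M_\rho(s')\ge M_\rho(s)$.

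Alternatively, one can argue directly. The map $s\mapsto\tfrac12[\psi(a+sb)+\psi(a-sb)]$ is even and concave in $s\in[-1,1]$, since $\psi$ is concave. An even concave function is nonincreasing on $[0,1]$, and averaging over $Y$ preserves this, using the symmetrization from the first paragraph.

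The main obstacle is the identification $M_\rho(s)=\cZ_W(g)$, in particular handling the $Y_0=-1$ branch through the parity of $T_\rho f_{\mathrm o}$ and $T_\rho f_{\mathrm e}$. Once that is done, both assertions follow from results already established: Theorem~\ref{thm:channels} for the bound, and Hellinger data processing for the monotonicity.
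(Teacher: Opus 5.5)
Your proof is correct. The bound is proved exactly as in the paper: put correlation $s$ on the extra bit and correlation $\rho$ on the original bits. The posterior mean of $g$ is then $a+sY_0b$ with $a=T_\rho f_{\mathrm o}$ odd and $b=T_\rho f_{\mathrm e}$ even. The substitution $y\mapsto-y$ gives $\E\psi(a+sb)=\E\psi(a-sb)$, so $\cZ_W(g)=M_\rho(s)$, and Theorem~\ref{thm:channels} applies.

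For monotonicity, your second argument is essentially the paper's. The paper notes that $f_{\mathrm o}+sf_{\mathrm e}$ takes values in $[-1,1]$ for $|s|\le1$, so $M_\rho$ extends to an even concave function on $[-1,1]$, hence is nonincreasing on $[0,1]$. You symmetrize pointwise before averaging rather than after, which is the same thing.

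Your primary argument is a valid alternative: degrade the channel on the extra bit by an independent BSC of correlation $s'/s$, then apply Hellinger data processing~\eqref{eq:processing}. It relies on the identification $M_\rho(s)=\cZ_{W_s}(g)$ holding for every $s\in[0,1]$, which you established. It also gives the operational reading that a noisier auxiliary bit can only increase the affinity. The concavity route needs no channel interpretation beyond the symmetrization. It also yields concavity of $M_\rho$ itself, and the paper uses that property in its remark right after the proposition: concavity alone bounds the loss from $M_\rho(\rho)$ to $M_\rho(1)$ in the wrong direction. The degradation argument gives only monotonicity.
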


\begin{proof}
Use correlation $\rho$ on the original bits and correlation $s$ on the
extra bit. The posterior mean of $g$ is $a(y)+s y_0b(y)$, where
$a=T_\rho f_{\mathrm o}$ is odd and $b=T_\rho f_{\mathrm e}$ is even.
Changing $y$ to $-y$ and using the evenness of $\psi$ gives
$\E\psi(a+sb)=\E\psi(a-sb)$. Averaging over $y_0$ and applying
Theorem~\ref{thm:channels} proves~\eqref{eq:lift}.
For $|s|\le1$, the function $f_{\mathrm o}+s f_{\mathrm e}$ takes values
in $[-1,1]$, so $M_\rho$ extends to an even concave function on $[-1,1]$.
Such a function is nonincreasing on $[0,1]$.
\end{proof}

Taking $s=\rho$ gives the dictator bound with one additional factor of
$\rho$ on every even-degree Fourier coefficient. Recovering the original
functional requires $s=1$, where the bound becomes zero. Balancedness
removes the constant coefficient but not the remaining even degrees.
Thus a quantitative bound on the loss from $M_\rho(\rho)$ to
$M_\rho(1)$ would be needed to recover the balanced conjecture by this
extension. Concavity alone supplies the wrong direction for that purpose.

\section{Auxiliary calculations}\label{app:calculations}

\subsection{Convexity of the entropy comparison function}\label{app:entropy}

The function used in Corollary~\ref{cor:shannon} is
$F(z)=h_2((1-\sqrt{1-z^2})/2)$ for $0\le z\le1$.
For $0<z<1$, put $u=\sqrt{1-z^2}$. Differentiation gives
\[
 F'(z)=\frac{z}{2u\ln2}\ln\frac{1+u}{1-u},\qquad
 F''(z)=\frac{\ln((1+u)/(1-u))-2u}{2u^3\ln2}.
\]
Both derivatives are nonnegative: the logarithm is positive, and
$\ln((1+u)/(1-u))=2\int_0^u(1-t^2)^{-1}\,dt\ge2u$.
Continuity, with $F(0)=0$ and $F(1)=1$, extends monotonicity and
convexity to the closed interval. No endpoint substitution into the
derivative formulas is needed.

\subsection{Spectral matching with unequal affinities}\label{app:heterogeneous}

For unequal coordinate affinities, the Walsh eigenvalue indexed by
$T\subseteq[n]$ is
$\prod_{i\notin T}(1+z_i)\prod_{i\in T}(1-z_i)$, by the same
coordinate diagonalization as in Section~\ref{sec:cost}.
For $\boldsymbol z=(1/4,1/2,1/2,3/4)$, sorting by parity gives
\[
 \begin{aligned}
  64a&=(315,63,63,35,27,15,15,3),\\
  64b&=(189,105,105,45,21,9,9,5).
 \end{aligned}
\]
The absolute differences sum to $240/64$. Since $N=16$, the
normalized cost is $N^{-1}\sum_j|a_j-b_j|=15/64$, below
$\min_i z_i=1/4$. This exact calculation explains why the proof uses
coordinatewise monotonicity after the equal-affinity spectral comparison.


\begin{thebibliography}{99}

\bibitem{ABCKN17}
V.~Anantharam, A.~Bogdanov, A.~Chakrabarti, T.~S.~Jayram, and C.~Nair,
\emph{A conjecture regarding optimality of the dictator function under
Hellinger distance}, Information Theory and Applications Workshop
(ITA), San Diego, 2017.
\href{https://chandra.ie.cuhk.edu.hk/pub/papers/HC/hel-conj.pdf}{Author manuscript}.

\bibitem{Arikan09}
E.~Ar{\i}kan, \emph{Channel polarization: A method for constructing
capacity-achieving codes for symmetric binary-input memoryless channels},
IEEE Transactions on Information Theory \textbf{55} (2009), no.~7,
3051--3073.
\href{https://doi.org/10.1109/TIT.2009.2021379}{doi:10.1109/TIT.2009.2021379}.

\bibitem{Audenaert14}
K.~M.~R.~Audenaert, \emph{A generalisation of Mirsky's singular value
inequalities}, 2014.
\href{https://arxiv.org/abs/1410.4941v1}{arXiv:1410.4941v1}.

\bibitem{Blackwell53}
D.~Blackwell, \emph{Equivalent comparisons of experiments},
The Annals of Mathematical Statistics \textbf{24} (1953), no.~2,
265--272.
\href{https://doi.org/10.1214/aoms/1177729032}{doi:10.1214/aoms/1177729032}.

\bibitem{CGJLMNW26}
Z.~Chen, A.~Gohari, A.~Javanmard, H.~Lin, V.~Mirrokni, C.~Nair,
and D.~P.~Woodruff, \emph{A proof of the most informative Boolean
function conjecture}, 2026.
\href{https://arxiv.org/abs/2609.24931v2}{arXiv:2609.24931v2}.

\bibitem{CGN26}
Z.~Chen, A.~Gohari, and C.~Nair, \emph{A differential equation approach
to the most-informative Boolean function conjecture}, 2025, revised 2026.
\href{https://arxiv.org/abs/2502.10019v2}{arXiv:2502.10019v2}.

\bibitem{CN24}
Z.~Chen and C.~Nair, \emph{On the optimality of dictator functions and
isoperimetric inequalities on Boolean hypercubes}, Proceedings of the
2024 IEEE International Symposium on Information Theory (ISIT),
3380--3385, 2024.
\href{https://doi.org/10.1109/ISIT57864.2024.10619180}{doi:10.1109/ISIT57864.2024.10619180}.
\href{https://chandra.ie.cuhk.edu.hk/pub/papers/HC/BF-conj.pdf}{Extended author manuscript}.

\bibitem{Chernoff52}
H.~Chernoff, \emph{A measure of asymptotic efficiency for tests of a
hypothesis based on the sum of observations},
The Annals of Mathematical Statistics \textbf{23} (1952), no.~4, 493--507.
\href{https://doi.org/10.1214/aoms/1177729330}{doi:10.1214/aoms/1177729330}.

\bibitem{CK14}
T.~A.~Courtade and G.~R.~Kumar, \emph{Which Boolean functions maximize
mutual information on noisy inputs?}, IEEE Transactions on Information
Theory \textbf{60} (2014), no.~8, 4515--4525.
\href{https://doi.org/10.1109/TIT.2014.2326877}{doi:10.1109/TIT.2014.2326877}.

\bibitem{DFR26}
P.~Durcik, M.~Fraccaroli, and J.~Roos, \emph{A weak Hellinger inequality
for noisy Boolean channels}, 2026.
\href{https://arxiv.org/abs/2609.28534v2}{arXiv:2609.28534v2}.

\bibitem{DIRX26}
P.~Durcik, P.~Ivanisvili, J.~Roos, and X.~Xie,
\emph{Sharp isoperimetric inequalities on the Hamming cube II:
The critical exponent}, 2026.
\href{https://arxiv.org/abs/2602.20462v1}{arXiv:2602.20462v1}.

\bibitem{KT26}
V.~K.~Ky and T.~Tran, \emph{Dictators are most informative}, 2026.
\href{https://arxiv.org/abs/2609.24184v1}{arXiv:2609.24184v1}.

\bibitem{LM21}
J.~Li and M.~M\'edard, \emph{Boolean functions: Noise stability,
non-interactive correlation distillation, and mutual information},
IEEE Transactions on Information Theory \textbf{67} (2021), no.~2, 778--789.
\href{https://doi.org/10.1109/TIT.2020.3041028}{doi:10.1109/TIT.2020.3041028}.
\href{https://arxiv.org/abs/1801.04462v6}{Author version, arXiv:1801.04462v6}.

\bibitem{Mirsky60}
L.~Mirsky, \emph{Symmetric gauge functions and unitarily invariant norms},
The Quarterly Journal of Mathematics \textbf{11} (1960), no.~1, 50--59.
\href{https://doi.org/10.1093/qmath/11.1.50}{doi:10.1093/qmath/11.1.50}.

\bibitem{ODonnell14}
R.~O'Donnell, \emph{Analysis of Boolean Functions}, Cambridge University
Press, Cambridge, 2014.
\href{https://doi.org/10.1017/CBO9781139814782}{doi:10.1017/CBO9781139814782}.
\href{https://arxiv.org/abs/2105.10386v1}{Corrected author edition (2021), arXiv:2105.10386v1}.

\bibitem{OpenAI26}
OpenAI, \texttt{openai/math}, repository overview, entry~119,
\emph{The Courtade--Kumar and Hellinger conjectures}.
\href{https://github.com/openai/math/blob/main/overview.tex}{Repository announcement},
accessed 7 October 2026.

\bibitem{Vallender74}
S.~S.~Vallender, \emph{Calculation of the Wasserstein distance between
probability distributions on the line}, Theory of Probability and Its
Applications \textbf{18} (1974), no.~4, 784--786.
\href{https://doi.org/10.1137/1118101}{doi:10.1137/1118101}.

\bibitem{vEH14}
T.~van Erven and P.~Harremo\"es, \emph{R\'enyi divergence and Kullback--Leibler divergence}, IEEE Transactions on Information Theory
\textbf{60} (2014), no.~7, 3797--3820.
\href{https://doi.org/10.1109/TIT.2014.2320500}{doi:10.1109/TIT.2014.2320500}.
\href{https://arxiv.org/abs/1206.2459v2}{Author version, arXiv:1206.2459v2}.

\bibitem{Yu23}
L.~Yu, \emph{On the $\Phi$-stability and related conjectures}, Probability
Theory and Related Fields \textbf{186} (2023), nos.~3--4, 1045--1080.
\href{https://doi.org/10.1007/s00440-023-01209-5}{doi:10.1007/s00440-023-01209-5}.

\bibitem{Yu26}
L.~Yu, \emph{Local optimality of dictator functions with applications to
Courtade--Kumar and Li--M\'edard conjectures}, 2024, revised 2026;
accepted for publication in The Annals of Applied Probability.
\href{https://arxiv.org/abs/2410.10147v5}{Extended version, arXiv:2410.10147v5}.

\end{thebibliography}
\end{document}